\renewcommand\bf\bfseries
\addspace\printfield{pages}\addspace
\newcommand{\leqnomode}{\tagsleft@true\let\veqno\@@leqno}
\newcommand{\reqnomode}{\tagsleft@false\let\veqno\@@eqno}
\numberwithin{equation}{section}
\newcommand\myshade{85}
\colorlet{mylinkcolor}{violet}
\colorlet{mycitecolor}{YellowOrange}
\colorlet{myurlcolor}{Aquamarine}
\definecolor{ct_black}{HTML}{000000}
\definecolor{ct_orange}{HTML}{ED872D}
\definecolor{ct_purple}{HTML}{7A68A6}
\definecolor{ct_blue}{HTML}{348ABD}
\definecolor{ct_turquoise}{HTML}{188487}
\definecolor{ct_red}{HTML}{E32636}
\definecolor{ct_pink}{HTML}{CF4457}
\definecolor{ct_green}{HTML}{467821}
\definecolor{ct2_green}{HTML}{9FF781}
\definecolor{ct2_green_dark}{HTML}{088A08}
\theoremstyle{plain}
\newtheorem{thm}{\protect\theoremname}[section]
\theoremstyle{plain}
\newtheorem{lem}[thm]{\protect\lemmaname}
\theoremstyle{plain}
\newtheorem{cor}[thm]{\protect\corollaryname}
\theoremstyle{plain}
\newtheorem{prop}[thm]{\protect\propositionname}
\theoremstyle{plain}
\theoremstyle{remark}
\newtheorem{rem}[thm]{\protect\remarkname}
\theoremstyle{definition}
\newtheorem{defn}[thm]{\protect\definitionname}
\theoremstyle{plain}
\providecommand{\assumptionname}{Assumption}
\providecommand{\claimname}{Claim}
\providecommand{\corollaryname}{Corollary}
\providecommand{\definitionname}{Definition}
\providecommand{\lemmaname}{Lemma}
\providecommand{\propositionname}{Proposition}
\providecommand{\remarkname}{Remark}
\providecommand{\theoremname}{Theorem}
\providecommand{\examplename}{Example}
\crefname{section}{Section}{Sections}
\crefname{appendix}{Appendix}{Appendices}
\crefname{figure}{Figure}{Figures}
\crefname{assumption}{Assumption}{Assumptions}
\crefname{thm}{Theorem}{Theorems}
\crefname{lem}{Lemma}{Lemmas}
\newtheorem*{lem*}{\protect\lemmaname}
\newcommand{\ee}{\operatorname{e}}
\newcommand{\ii}{\operatorname{i}}
\newcommand{\Mat}{\operatorname{Mat}}
\newcommand{\Herm}{\operatorname{Herm}}
\newcommand{\ZZ}{\mathbb{Z}}
\newcommand{\NN}{\mathbb{N}}
\newcommand{\RR}{\mathbb{R}}
\newcommand{\CC}{\mathbb{C}}
\newcommand{\FF}{\mathbb{F}}
\newcommand{\PP}{\mathbb{P}}
\newcommand{\EE}{\mathbb{E}}
\newcommand{\VV}{\mathbb{Var}}
\newcommand{\calG}{\mathcal{G}}
\newcommand{\calO}{\mathcal{O}}
\newcommand{\calR}{\mathcal{R}}
\newcommand{\calD}{\mathcal{D}}
\newcommand{\calV}{\mathcal{V}}
\newcommand\norm[1]{\left\lVert#1\right\rVert}
\newcommand{\ip}[2]{\langle #1, #2 \rangle}
\newcommand{\HSip}[2]{\langle #1, #2 \rangle_{\mathrm{HS}}}
\newcommand{\dif}{\operatorname{d}}
\newcommand{\tr}{\operatorname{tr}}
\renewcommand{\Im}[1]{\operatorname{\mathbb{I}\mathbbm{m}}\{#1\}}
\renewcommand{\Re}[1]{\operatorname{\mathbb{R}\mathbbm{e}}\{#1\}}
\newcommand{\ve}{\varepsilon}
\newcommand{\vf}{\varphi}
\newcommand{\Id}{\mathds{1}}
\newcommand{\supp}{\operatorname{supp}}
\newcommand{\prob}[1]{\PP\left[\Set{#1}\right]}
\newcommand{\ex}[1]{\EE\left[#1\right]}
\newcommand{\HSn}[1]{\norm{#1}_{\mathrm{HS}}}
\newcommand{\eq}[1]{\begin{align*}#1\end{align*}}
\newcommand{\eql}[1]{\begin{align}#1\end{align}}
\title{Dynamical Localization for Random Band Matrices\\ up to $W\ll N^{1/4}$}
\author{\href{mailto:gc4233@princeton.edu}{Giorgio Cipolloni}\\
	{\footnotesize Princeton Center for Theoretical Science, Princeton University }\\
	\href{mailto:peledron@tauex.tau.ac.il}{Ron Peled}\\
	{\footnotesize School of Mathematical Sciences, Tel Aviv University}\\
	\href{mailto:schenke6@msu.edu}{Jeffrey Schenker}\\
	{\footnotesize Department of Mathematics, Michigan State University}\\
	\href{mailto:jacobshapiro@princeton.edu}{Jacob Shapiro}\\
	{\footnotesize Department of Physics, Princeton University}
}
\begin{document}
	\reqnomode
	
	\maketitle
	\begin{abstract}
		
		
		We prove that a large class of $N\times N$ Gaussian random band matrices with band width $W$ exhibits dynamical Anderson localization at all energies when $W \ll N^{1/4}$. The proof uses the fractional moment method \cite{Aizenman_Molchanov_1993_cmp/1104253939} and an adaptive Mermin--Wagner style shift.
	\end{abstract}
	
	\section{Introduction}
	Let $W\in\NN$ and $\Set{T_j}_{j=1}^{\infty},\Set{V_j}_{j=1}^{\infty}$ be two sets of indepedent and identically distributed $W\times W$ complex matrices distributed according to the Ginibre ensemble and Gaussian Unitary Ensemble ($\mathrm{GUE}$), respectively.  For any $n\in\NN$ define the random $nW\times nW$ matrix 
	\eql{
		\label{eq:Hamiltonian}
		H & :=  \begin{bmatrix}V_{1} & -T_{1}^{\ast} & 0\\
			-T_{1} & V_{2} & -T_{2}^{\ast}\\
			0 & -T_{2} & V_{3}\\
			&  &  & \dots\\
			&  &  &  & \dots\\
			&  &  &  &  & V_{n-2} & -T_{n-2}^{\ast} & 0\\
			&  &  &  &  & -T_{n-2} & V_{n-1} & -T_{n-1}^{\ast}\\
			&  &  &  &  & 0 & -T_{n-1} & V_{n}
		\end{bmatrix}.
	}
	In the present paper, we shall prove
	\begin{thm}\label{thm:first loc theorem} Let $K\subseteq\CC$ be compact. Then there exists an $s_0\in(0,1)$ such that for all $s\in(0,s_0)$ and $z\in K$ there exist $C<\infty,\mu>0$, independent of $n$ and $W$, such that
		\eq{
    	\sup_{z\in K}\ex{\norm{\left(H-z\Id\right)^{-1}_{\quad i,j}}^s} \leq W^C \exp\left(-\mu \frac{|i-j|}{W^4}\right)\qquad(i,j\in\Set{1,\dots,nW})\,.
		}
	\end{thm}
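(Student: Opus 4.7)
Writing $G(z) := (H - z\Id)^{-1}$ for the resolvent, the plan is to implement the Aizenman--Molchanov fractional moment method in this block-tridiagonal setting and close a one-step recursion of the form
\eq{
\ex{\norm{G_{i,j}(z)}^s} \leq \bigl(1 - c\, W^{-4}\bigr)\, \ex{\norm{G_{i,j-1}(z)}^s}
}
for some $c > 0$ and $s \in (0, s_0)$ small. Iterating $|i-j|$ times along the chain yields the claimed exponential decay at rate $\mu/W^4$, with the $W^C$ prefactor absorbing the a priori bound that initiates the recursion. Two preliminaries are standard. First, an a priori fractional moment estimate $\sup_{z \in K} \ex{\norm{G_{i,j}(z)}^s} \leq W^C$: since the diagonal block $V_j$ is GUE and has a bounded density, one integrates it out via a rank-$W$ Krein-type resolvent identity (or a Schur complement) to get a uniform $L^s$ bound on the local resolvent, and transfers this to off-diagonal blocks by Schur. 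Second, a block resolvent identity: cutting the chain between blocks $k$ and $k+1$ and applying $G = G^{0} - G^{0} V G$ with $V = -T_k^* - T_k$ the coupling, one expresses $G_{i,j}$ (for $i \leq k < j$) as $G^{\text{L}}_{i,k}\, T_k^*\, G_{k+1,j}$, where $G^{\text{L}}$ is the left sub-chain resolvent, independent of $T_k$.

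The principal novelty is the adaptive Mermin--Wagner shift, carried out when taking the expectation over $T_k$. Conditionally on all other randomness, $T_k$ is Ginibre, so one performs the change of variables $T_k \mapsto T_k + A_k$, with $A_k$ measurable with respect to the conditioning $\sigma$-algebra---this is the adaptive aspect. The Radon--Nikodym factor is a Gaussian weight in $\tr(A_k^* A_k)$ together with cross terms $\tr(A_k^* T_k) + \tr(T_k^* A_k)$. One selects $A_k$ so as to cancel the structurally dominant term in the block recursion, which would otherwise prevent the multiplicative factor from dropping below $1$. Heuristically, the Mermin--Wagner picture is that the Gaussian cost grows quadratically in $\HSn{A_k}$ while the structural gain is polynomial; optimizing the trade-off across the chain distributes the shift amplitude over sites and, after integrating out the $W^2$ free parameters in each Ginibre block, produces the $W^{-4}$ rate.

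The main obstacle is executing this balance rigorously. The choice of $A_k$ must both respect adaptivity and cancel precisely the right term so that, after taking the fractional power and integrating the Gaussian, one lands on the correct side of $1$; the Girsanov cross terms must be dominated by the Gaussian cost without spoiling the uniform bound; and the recursion must close over $|i-j|$ iterations without the accumulated error overwhelming the $W^{-4}$ gain per step. Preserving adaptivity along the chain and matching the recursion to the a priori bound at the terminal block is where the Mermin--Wagner heuristic has to be made quantitative, and it is here that the $W^{-4}$ threshold---as opposed to the physically expected $W^{-2}$---emerges.
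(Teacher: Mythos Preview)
Your outline diverges from the paper's argument in a way that is not merely cosmetic, and the central mechanism is missing.

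The paper does \emph{not} close a one-step recursion of the form $\ex{\norm{G_{i,j}}^s}\le(1-cW^{-4})\ex{\norm{G_{i,j-1}}^s}$. Instead it combines the a priori bound with a \emph{lower bound on the variance of the logarithm} of the end-to-end Green's function: writing $X_n=\log\norm{\calG_n}$ with $\calG_n=G(1,n;z)$, the identity
\eq{
\ex{\norm{\calG_n}^r}=\ex{\norm{\calG_n}^s}^{r/s}\exp\Bigl(-\int_0^s f_{r,s}(q)\,\VV_q[X_n]\,\dif q\Bigr)
}
shows that $\VV_q[X_n]\gtrsim n/W^3$ (in block distance) already yields the claimed exponential decay. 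That variance lower bound is obtained by a \emph{collective, multiplicative} deformation of all hoppings simultaneously, $T_j^\pm=\exp(\pm\delta F_j)\,T_j$ for $j=1,\dots,n-1$, with adaptive cutoffs $F_j$. The essential mechanism---absent from your sketch---is Pfister's $\pm$ cancellation: by combining the forward and backward shifts one kills the order-$\delta$ term in the energy functional, so the cost is $O(\delta^2)$ per site rather than $O(\delta)$. Balancing $n$ sites of cost $\sim\delta^2 W^3$ against a macroscopic change $\delta F\sim\delta n$ in $X_n$ forces $\delta\sim(nW^3)^{-1/2}$ and produces the $W^3$ block-rate (hence $W^4$ in entry distance). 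The extension to large real energies and then to complex $z\in K$ is handled separately, by Combes--Thomas and subharmonicity respectively.

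Your proposal instead shifts a single $T_k$ additively, $T_k\mapsto T_k+A_k$, and hopes to extract a per-step contraction. Two concrete problems. First, a one-bond Girsanov shift with cost $\HSn{A_k}^2$ and no $\pm$ companion leaves the linear cross terms uncancelled; dominating them by the Gaussian weight costs an $O(1)$ factor, not $1-cW^{-4}$, and this is precisely why pre-existing fractional-moment arguments for RBM stalled at much worse exponents. Second, the sentence ``optimizing the trade-off across the chain distributes the shift amplitude over sites'' contradicts your own single-site recursion: the Mermin--Wagner gain comes exactly from spreading a macroscopic shift over $n$ sites so each site sees only an $O(n^{-1/2})$ perturbation, which requires a global deformation, not an iterated local one. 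As written, the heuristic for why $W^{-4}$ (rather than a worse power) should emerge is unsubstantiated, and the $\pm$ cancellation that drives the improvement from $N^{1/7}$ to $N^{1/4}$ is missing entirely.
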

	
	The operator $H$ defined in \cref{eq:Hamiltonian} is the so-called \emph{Wegner W-orbital model}, introduced in \cite{wegnerOrbital1979}, which is most natural for our methods.  \cref{thm:first loc theorem} states that this model exhibits \emph{Anderson localization}, with  localization length $\lesssim W^4$. For this model, standard arguments imply the full range of dynamical localization results; see below for a definition of these notions and \cref{sec:mainres} for the precise formulation of the model and results.  In \cref{sec:Generalizations} we explain how the proof may be modified to deal with the proper random band matrix case (in which case the $T_j$'s are lower triangular), real-valued matrices and other possible generalizations.
	
	\subsection{Background}
	Anderson localization is a physical phenomenon wherein a quantum particle becomes trapped because of disorder in its potential-energy ``landscape.'' This phenomenon was first studied by Anderson \cite{Anderson_1958_PhysRev.109.1492} using the following model Hamiltonian on $\ell^2(\ZZ^d)$:
	\eql{ 
		\label{eq:Anderson's model}
		(H\psi)_x := -(T\psi)_x + \lambda V_x\psi_x\qquad(\psi\in\ell^2(\ZZ^d),x\in\ZZ^d) 
	} 
	where $(T\psi)_x := \sum_{\norm{e}=1}\psi_{x+e}$ is the discrete Laplacian (up to a constant) and $\Set{V_x}_{x\in\ZZ^d}$ is a set of independent and identically distributed random variables; the constant $\lambda>0$ parameterizes the strength of the disorder. For $\lambda \gg 1$, Anderson argued that interference effects cause all eigenfunctions of this operator to decay rapidly at infinity, leading to an absence of diffusion, i.e., localization. 
	
	The first mathematical proof of Anderson localization is due to Goldsheid, Molchanov and Pastur \cite{goldsheid1977}, in the context of a related one-dimensional model. For Anderson's model \cref{eq:Anderson's model} with $d=1$, the first proof was due to Kunz and Souillard \cite{Kunz_Souillard_1980_cmp/1103908590}. Then Fr\"ohlich and Spencer \cite{Frohlich_Spencer_1983_cmp/1103922279} used multi-scale analysis to prove that the Green's function 
	\eql{
		\label{eq:Green'sfunction}
		G(x,y;E+\ii\ve)\ = \  \ip{\delta_x}{(H-(E+i\epsilon)\Id)^{-1}\delta_y}
	}
	of Anderson's model (in any dimension $d$) decays exponentially as $\norm{x-y}\to \infty$ with probability one, either at large disorder $\lambda \gg1$ or at extreme energies $|E|\gg1$, with bounds uniform in $\ve>0$. A different proof was obtained by Aizenman and Molchanov \cite{Aizenman_Molchanov_1993_cmp/1104253939,aizenmanLocalizationWeakDisorder1994}, based on showing that \emph{fractional moments} of the Green's function, i.e., $\EE\left[|G(x,y;z)|^s\right]$ with $0<s<1$, decay exponentially.

	The relation between Green's function decay, eigenfunction decay, and dynamical bounds has been explored by a number of authors. Already in \cite{Frohlich_Spencer_1983_cmp/1103922279}, Fr\"ohlich and Spencer proved the absence of diffusion; in follow up work with Martinelli and Scoppola \cite{frohlichConstructiveProofLocalization1985} they proved exponential localization of the eigenfunctions. From fractional moment bounds,  Aizenman \cite{aizenmanLocalizationWeakDisorder1994} proved \emph{dynamical localization}. More precisely, Aizenman showed that if for some interval $I\subseteq\RR$ there is an $s\in (0,1)$ such that at each $E\in I$ we have \eql{
 		\label{eq:fmc}
 		\sup_{\eta>0}\EE\left[|\ip{\delta_x}{\left(H-(E+\ii\eta)\Id\right)^{-1}\delta_y}|^s\right]\leq C \ee^{-\norm{x-y}/\xi}\qquad(x,y\in\ZZ^d) \, 	} 
 	for some $C,\xi<\infty$, then one has \emph{strong exponential dynamical localization on $I$} in the  sense that: 
	\eql{
		\label{eq:absence of diffusion}
		\EE\left[\sup_{t>0}\left|\ip{\delta_x}{\exp(-\ii t H)\chi_{I}(H)\delta_y}\right|\right]\leq C'\ee^{-\norm{x-y}/\xi'}\qquad(x,y\in\ZZ^d) } 
	with $C'\propto C$ and $\xi'\propto \xi$.  The parameter $\xi$ (or $\xi'$) is  referred to as the \emph{localization length}. 
	From \cref{eq:absence of diffusion} one may conclude, e.g., using the RAGE theorem \cite[Theorem 2.6]{AizenmanWarzel2016}, that the spectrum of $H$ in $I$ is pure point with exponentially decaying eigenfunctions. 
	
	
	Although \emph{dynamical localization} \cref{eq:absence of diffusion} is physically surprising, mathematically, the problem of proving diffusion for energies in the center of the band when $\lambda \ll 1$, and establishing the so-called metal-insulator transition as $\lambda,E$ are varied, remains one of the biggest open problems in mathematical physics.  Furthermore, no transition is expected in dimensions $d=1,2$, where localization is expected at all energies when $\lambda >0$ \cite{Gang_of_4_1979_PhysRevLett.42.673}; this was proved for $d=1$ in \cite{goldsheid1977,Kunz_Souillard_1980_cmp/1103908590} but remains open in $d=2$. 
	For $d=1$, the operator on a finite interval $[1,n]\cap \ZZ$, there is a transition from extended states/GOE statistics to localized states/Poisson statistics if one takes $\lambda \propto 1/\sqrt{n}$; see, e.g., \cite{kritchevski2012,valko2014,rifkind2018}.
	
	
	To shed light on the, as yet conjectural, metal-insulator transition for $d\geq3$ in \cref{eq:Anderson's model}, it has been proposed \cite{wegnerOrbital1979,PhysRevLett.64.1851,fyodorov1991scaling} to study localization in the context of \emph{random band matrices}, where instead of varying $\lambda$ one replaces Anderson's model \cref{eq:Anderson's model} by a Hamiltonian with random hoppings up to range $W$. Before turning to the proper model which is associated with this name, let us still think for a moment of infinite-volume systems. Consider a self-adjoint operator $H$ on $\ell^2(\ZZ)$ where $\{H_{xy}:=\ip{\delta_x}{H\delta_y}\}_{x>y}$ are independent and identically distributed random complex variables and $\{H_{xx}\}_{x}$ are i.i.d. real variables, and such that \eql{
		\label{eq:infinite volume RBM model} 
		H_{xy} = 0\qquad(x,y\in\ZZ^d:\norm{x-y}> W) 
	} for some range $W\in\NN$. To keep the spectrum of $H$ of order $1$ (as $W\to\infty$), one may choose the entries to be mean-zero with variance of order $1/W$. In these models, the range of the hopping $W$ replaces the disorder strength $\lambda$. It is conjectured (e.g., \cite{PhysRevLett.64.1851,fyodorov1991scaling}) that $\xi$, the localization length, should depend on the range of the hopping as 
	\eql{
		\label{eq:conjecture for localization length of infinite volume RBM}
		\xi \propto W^2\,.
	} These models may be considered as quasi-one-dimensional, i.e., as being defined on the strip of width $W$, $\ell^2(\ZZ)\otimes\CC^W$, with a Hamiltonian given by 
	\eql{ 
		\label{eq:infinite volume one d RBM model}
		(H\psi)_j := -T_{j}^\ast \psi_{j+1}-T_{j-1}\psi_{j-1}+V_j\psi_j\qquad(\psi:\ZZ\to\CC^W;j\in\ZZ) 
	} 
	where $\Set{T_j}_j$ (resp. $\Set{V_j}_j$) are random $W\times W$ complex triangular (resp. Hermitian) matrices. Such models are known to be completely localized \cite{KLEIN1990135,Carmona1987,shapiro2021incomplete,macera2021anderson} for any $W$, but the localization length in these studies is not estimated quantitatively.
	
	If, however, we restrict to finite volume, by truncating the operator in \cref{eq:infinite volume one d RBM model} to the Hilbert space $\ell^2(\Set{1,\dots,n})\otimes\CC^W\cong\CC^{nW}$ for some $n$, we get the proper one-dimensional \emph{random band matrix} model, which is a matrix of size $N:=nW$ and band width $W$ presented in \cref{eq:Hamiltonian} above.
	If both asymptotic parameters $W,N$, with $W\leq N$, go to infinity simultaneously, \cite{PhysRevLett.64.1851, PhysRevE.48.R1613, fyodorov1991scaling, fyodorov1994statistical} the following two distinct behaviors are expected depending on how this limit is taken:
	\begin{enumerate}
		
		\item \label{it:1} For $W\ll \sqrt{N}$ the eigenvectors of $H$ are localized, with localization length $\xi \propto W^2 \ll N$ and the local eigenvalue statistics are asymptotically a Poisson point process (as they are in the localized phase of Anderson's model \cite{minamiLocalFluctuationSpectrum1996}). The Green's function should obey the fractional moment condition \cref{eq:fmc}, and the diffusion constant should be zero.
		
		\item \label{it:2} For $W\gg \sqrt{N}$ the eigenvectors of $H$ are delocalized and the local eigenvalue statistics are asymptotically the same as those of GUE matrices, namely a $\mathrm{Sine}_\beta$ process with $\beta=2$. The system should exhibit ``quantum unique ergodicity'' (QUE) \cite{Bourgade2017} (see also \cite{cipolloni2021eigenstate, cipolloni2022normal, cipolloni2022rank, benigni2022fluctuations, bourgade2017eigenvector, https://doi.org/10.1002/cpa.21895} for recent QUE results for``mean-field'' models, i.e. when $W\sim N$).
		
	\end{enumerate}
	
	In the present note, we study the localization side of this problem and establish the fractional moment condition up to $W\ll N^{1/4}$ with localization length smaller than $N^{1/4}$, see \cref{thm:main localization theorem} below. Explicitly, we consider complex $nW\times nW$ random band matrices (RBM) $H$ with a band width $W$, thinking of the matrix $H$ as an $n \times n$ block matrix, with each block of size $W$, of the form \cref{eq:Hamiltonian}.
	This is the so-called \emph{Wegner W-orbital model}, introduced in \cite{wegnerOrbital1979}, which is most natural for our methods. Our main result is that the localization length of this model is bounded above by $C W^3$ with distance measured between blocks of size $W$.  Thus, the matrix exhibits localization if $n\gg W^3$, which is to say $N=nW\gg W^4$, i.e., $W\ll N^{1/4}$.   

\subsection{Our approach: fluctuations, localization, and the Mermin--Wagner theorem}


The study of disordered systems, including RBM, shares many ideas and methods with the field of statistical mechanics.  In the present paper we follow the basic argument already presented in \cite{Schenker2009}, which combines the \emph{a-priori} bound on fractional moments of the Green's function with a lower bound on its logarithmic fluctuations. These two together yield exponential decay of the Green's function, as we will explain below.  

The fact that fluctuations lead to a decay of correlations is the key idea behind the  Mermin--Wagner theorem on the absence of continuous symmetry breaking in two-dimensional statistical mechanical models \cite{PhysRevLett.17.1133}.  In the present paper, and in \cite{Schenker2009}, a quantitative lower bound on logarithmic fluctuations, and hence on the localization length, is obtained via a collective microscopic deformation on all the random variables, so as to exhibit a macroscopic lower bound.  This argument is inspired by Dobrushin's and Shloshman's proof of the Mermin--Wagner theorem \cite{dobrushinAbsenceBreakdownContinuous1975}.  The idea to use such arguments in the context of random operators was proposed by Aizenman, as noted in \cite{Schenker2009}.  

The main idea of the proof is most conveniently explained in the context of the classical $XY$ model of statistical mechanics, which describes a circle-valued field $\theta:\Lambda=[-L,L]^2\cap\ZZ^2\to [0,2\pi)$ with Hamiltonian 
$$ H_L(\theta;\theta^{\text{bc}}) \ = -\ \sum_{x\sim y} \cos(\theta_{x}-\theta_{y}) - \sum_{(x,y)\in \partial \Lambda} \cos(\theta_{x} -\theta^{\text{bc}}_{y}), \ , $$
where $\sim$ denotes nearest neighbors, $\partial \Lambda$ is the boundary of $\Lambda$ consisting of nearest neighbor pairs with one element in $\Lambda$ and one element in $\Lambda^c$, and $\theta^{\text{bc}}$ is a \emph{boundary condition} at points immediately outside $\Lambda$.  Given the boundary condition $\theta^{\text{bc}}$, one takes $\theta$ distributed according to the Gibbs measure $\frac{1}{Z_\beta} \exp(-\beta H_L(\theta))$.  The Mermin--Wagner theorem asserts that the distribution of $\theta$ deep inside $\Lambda$ (say, in some fixed neighborhood of $0$), is insensitive to the boundary condition $\theta^{\text{bc}}$ in the limit $L\to \infty$.  This is the \emph{absence of symmetry breaking}. 

To exhibit a macroscopic fluctuation in $\theta$ (and hence rule out symmetry breaking), we make a microscopic shift  $\theta^{\pm}_{x} = \theta_{x} \pm u(x)$, with $u$ a yet unspecified function whose gradients we assume to be small throughout $\Lambda$ and zero at the boundary of $\Lambda$.  Our goal is to choose $u$ so that $u(0)$ is of order one, while the change of the Hamiltonian is small.  The simplest argument, comparing $H_L(\theta^{\pm})$ and $H_L(\theta)$ to leading order in $u$ leads to a bound that is too large to be useful.  A key observation, due to Pfister \cite{Pfister_1981_cmp/1103908962} in his proof of the Mermin--Wagner theorem, is that by combining forward and backwards shifts this leading order term is cancelled and one has
\begin{equation}\label{eq:Pfister}
    \left|-\frac{1}{2} H_L(\theta^{+}) - \frac{1}{2} H_L(\theta^{-}) + H_L(\theta)\right| \ \lesssim \   \sum_{x\sim y} (u(x) - u(y))^2 \ .
    \end{equation}
Choosing $u$ appropriately, we can obtain $u(0)=O(1)$ with $\sum_{x\sim y} (u(x) - u(y))^2 = O(1/\ln L)$. From here it is a short argument to conclude that the distribution of the $\theta_0$ is asymptotically equal to Lebesgue measure on the circle as $L\to \infty$.  
    
The previous work on localization RBM \cite{Schenker2009} relied on a fluctuation argument that did not make use of cancellations similar to those in \eqref{eq:Pfister}.  Because of the quasi-$1d$ nature of RBM models, there were still sufficient fluctuations at leading order to obtain an estimate.  In the present paper we make use of these cancellations, which enables us to improve the localization length estimate from $N^{1/7}$ to $N^{1/4}$. Compared to previous cases in which such Mermin--Wagner techniques have been used, e.g., \cite{Pfister_1981_cmp/1103908962,Milos2015}, the nature of our problem (which carries a double asymptotic parameter $n,W\to\infty$) makes the argument more delicate. We must balance the shifts (as a function of $W$) so that they on the one hand generate a macroscopic change, but on the other hand do not ``cost'' an amount which diverges as $W\to\infty$ too quickly. This cannot actually be ensured for every single matrix, since, as $n\to\infty$, it is certain that a portion of the blocks will behave badly. To deal with this problem we introduce various cut-off functions into the shift and use large deviation estimates to guarantee enough blocks behave well. Hence, the shifts we make depend on the random realization and for this reason it would be appropriate to call this an ``adaptive'' Mermin--Wagner argument. Such an adaptive approach first appeared in \cite{Richthammer2007} and subsequently also  in e.g. \cite{Milos2015,Kozma_Peled_10.1214/21-EJP639}.

	\subsection{Prior results on random band matrices}
	We now discuss the existing mathematical literature in both the localized and delocalized regimes of RBM. The first result in the localized regime \cite{Schenker2009} was a proof of localization of the eigenvectors for RBM with $W\ll N^{1/8}$; this result has then been improved to $W\ll N^{1/7}$ in \cite{10.1093/imrn/rnx145} for the Gaussian models considered here. The main result of this paper is the proof of the localization for $W\ll N^{1/4}$.
	
	The convergence of the local eigenvalue statistics for RBM in the localized regime to a Poisson point process is still open.   In \cite{brodie2020density}, Poisson statistics were proved to hold in the limit $N\to \infty$ with $W$ fixed. A recent result establishes that any non-trivial limit point of counting functions of local eigenvalue statistics is Poisson distributed \cite{hislop2022}, but convergence to a single limit with intensity given by the semi-circle law density of states has not been proved.
	
	On the delocalized side of the transition, after several results about smaller and smaller band width \cite{bao2017delocalization, erdHos2011quantum, erdHos2015altshuler, erdHos2015altshuler2, erdHos2013delocalization}, the most recent results are in \cite{https://doi.org/10.1002/cpa.21895,bourgade2019random, yang2021random}, where Bourgade, Yang, Yau, and Yin proved both delocalization of eigenvectors and convergence of local eigenvalue statistics to the corresponding GUE/GOE limiting correlation functions for RBM with $W\gg N^{3/4}$. Using supersymmetric techniques (SUSY), M. and T. Shcherbina proved the convergence of the $2$-point correlation function to the corresponding GUE counterpart for a specific model of RBM with complex entries down to the optimal $W\gg N^{1/2}$, however they do not prove delocalization of eigenvectors \cite{shcherbina2018universality, shcherbina2021universality}.
	At the edge of the spectrum S. Sodin proved that a phase transition occurs when $W\sim N^{5/6}$ \cite{sodin2010spectral}.  In a series of works M. and T. Shcherbina \cite{Shcherbina2017} and T. Shcherbina \cite{ shcherbina2020characteristic, shcherbina2014second, shcherbina2022susy} computed, using SUSY techniques, the expectation of products of characteristics polynomials in the whole regime $1\lesssim W\lesssim N$ showing that a crossover appears at $W\sim N^{1/2}$; in \cite{shcherbina2020characteristic} even the threshold regime $W\sim N^{1/2}$ is analyzed.

	RBM have also been studied in higher dimensions. Since the focus of this paper is on $1d$ RBM, we mention only a few significant results. The limiting density of states down to arbitrary short scales has been derived in \cite{disertori2017density} and \cite{disertori2002density} for $d=2$ and $d=3$, respectively. More recently, Yang, Yau and Yin proved delocalization of eigenvectors of RBM with a band width $W\gg 1$ for $d\ge 8$ \cite{yang2021delocalization, yang2021delocalization2}.

	Before posting this manuscript but after its completion we learnt that independently Nixia Chen and Charles Smart have also obtained localization for $W\ll N^{1/4}$ for the Gaussian random band matrix model \cite{Chen_Smart2022}; the two preprints have appeared simultaneously. The two papers share the general philosophy of exhibiting exponential decay through logarithmic fluctuations. However, while we use the so-called adaptive Mermin--Wagner shift to generate logarithmic fluctuations, they, following Schenker's original paper \cite{Schenker2009}, work by analyzing the marginal distribution of a scalar degree of freedom and showing it is log-concave.

	\subsection{Organization of the paper}
	
	The rest of this paper is organized as follows. In \cref{sec:mainres}, we present our main new result, \cref{thm:main localization theorem}, concerning localization at real energies $z$ with $|z|\lesssim \sqrt{W}$, and explain how this result implies \cref{thm:first loc theorem}.
In \cref{sec:sp} we reduce the proof of \cref{thm:main localization theorem} to a lower bound on logarithmic fluctuations of the Green's function and present our main tool to derive lower bounds on fluctuations: the Mermin--Wagner estimate. This estimate is derived then for our particular model in \cref{sec:MW}. The remaining technical estimates are delayed to \cref{sec:Calculation of the eta Jacobian,sec:the beta term bound,sec:the epsilon term bound,sec:technical}. In \cref{sec:Generalizations} we discuss generalizations to other models which may also be handled by our method.

	\subsection{Notations and conventions}

	For vectors $v,u\in \mathbb{C}^W$ we take the usual scalar product:
	\[
	\langle v,u\rangle\equiv \sum_{i=1}^W \overline{v_i}u_i,
	\]
	and let $\norm{v}=\sqrt{\langle v,v\rangle}$ denote the Euclidean norm.  The corresponding matrix operator norm for $A\in\mathbb{C}^{W\times W}$ is
	\[ \norm{A} = \sup_{\norm{v}=1} \norm{Av} \ . \]
	Additionally, by $|A|^2\equiv A^\ast A$ we denote the absolute value of a matrix, and by  $\HSn{A}$ we denote its Hilbert-Schmidt norm:
	\[
	\HSn{A}^2\equiv\tr\left(A^\ast A\right)=\sum_{i,j=1}^W|A_{ij}|^2\,.
	\]
	
	Finally, for positive quantities $f,g$ we write $f\lesssim g$ if there exists $C>0$, independent of any asymptotic parameter (in this paper, $n$ and $W$, with $n,W$ as in \cref{eq:Hamiltonian}) such that $f\le Cg$, and we will write $f\approx g$ if $f\lesssim g$ and $g\lesssim f$.  Furthermore, we write $f\ll g$ if there exists a small $c>0$, independent of $n$ and $W$, such that $f\le N^{-c}g$, with $N=nW$.

	\bigskip
	
	\noindent\textbf{Acknowledgements:} 
	As was already pointed out in \cite{Schenker2009}, the origin of the idea that logarithmic fluctuations combined with the a-priori bound implies localization was first proposed by Michael Aizenman, whom we thank for many useful discussions.
	
	The research of R.P. was supported by the Israel Science Foundation grants 861/15 and 1971/19 and by the European Research Council starting grant 678520 (LocalOrder) and Consolidator grant 101002733 (Transitions).
	
	The research of J.Sc. was supported by the US National Science Foundation Grants No. 1900015 and 2153946. 
	\bigskip
	
	\section{Main results}
	\label{sec:mainres}
    In this section we collect together all statements which together constitute \cref{thm:first loc theorem}. Strictly speaking the contribution of the present paper is  \cref{thm:main localization theorem}, below.
	
	Let $H$ be a random band matrix (RBM) as in \cref{eq:Hamiltonian}. Explicitly, its distribution is given by 
	\eql{
		\label{eq:density}
		\frac{1}{Z_{n,W}} \exp\left(-W \tr\left(\sum_{j=1}^{n}|V_j|^2+\sum_{j=1}^{n-1}|T_j|^2\right)\right)\dif{V_1}\dots\dif{V_n}\dif{T_1}\dots\dif{T_{n-1}}\,,
	}
	where $\dif{V_j},\dif{T_j}$ are the Lebesgue measures on Hermitian $W\times W$ matrices, and on complex $W\times W$ matrices, and $Z_{n,W}$ is a normalization constant making \cref{eq:density} into a probability distribution. Define the block Green's function $G(x,y;z)\in\Mat_{W}(\CC)$ via its matrix elements 
	\eq{ 
		G(x,y;z)_{ij} := \ip{\delta_x\otimes e_i}{(H-z\Id_N)^{-1}\,\delta_y\otimes e_j}_{\CC^N} \qquad(x,y\in\ZZ;\,i,j\in\Set{1,\dots,W})
	} with $\Set{e_j}_j$ the standard basis of $\CC^W$ and $\Set{\delta_x}_x$ the standard basis of $\CC^n$; hence $\Set{e_j\otimes\delta_x}_{j,x}$ is the standard basis of $\CC^N\cong\CC^W\otimes\CC^n$. Here the indices $x,y$ correspond to the $(x,y)$-th $W\times W$-dimensional block of $H$ and $(i,j)$ to the matrix elements within this block.
	
	Our main result is:
	
	\begin{thm}\label{thm:main localization theorem} Assume that $H$ is of the form \cref{eq:Hamiltonian} and distributed according to \cref{eq:density}. There exists an $s_0\in(0,1)$ such that for all $s\in(0,s_0)$ and $z\in\RR$ with $|z|<M$, where $M\in(0,\infty)$ and $M\lesssim \sqrt{W}$, there exist $C<\infty,\mu>0$, independent of $n$ and $W$, such that
		\eql{
			\label{eq:main fractional moment bound}
			\ex{\norm{G(x,y;z)}^s} \leq W^C \exp\left(-\mu \frac{|x-y|}{W^\sharp}\right)\qquad(x,y\in\Set{1,\dots,n})
		}
		with $\sharp=3$.
	\end{thm}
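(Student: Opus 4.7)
The plan is to implement the adaptive Mermin--Wagner shift program described in the introduction. The first ingredient is an Aizenman--Molchanov style a-priori fractional moment bound $\EE[\norm{G(x,y;z)}^s] \leq W^C$, valid for some small $s>0$ uniformly in $x, y, n$ and $z$ in the specified interval; for the Gaussian model at hand this follows from standard Gaussian regularization of the resolvent. Such a bound handles the regime $|x-y| \lesssim W^3$ automatically, and reduces the problem to establishing the exponential gain $e^{-\mu L/W^3}$ when $L := |x-y| \gg W^3$.

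To extract this gain I would reduce the problem to a macroscopic lower bound on the fluctuations of $\log\norm{G(x,y;z)}$, generated by a smooth one-parameter family of adaptive perturbations $(V_j,T_j)\mapsto (V_j^\epsilon,T_j^\epsilon)$ indexed by a real parameter $\epsilon$ and a shift profile $\{A_j\}_{j=x}^{y}$ of $W\times W$ generators depending on the random configuration. Under such a perturbation the joint density $\exp(-W\sum_j \tr(|V_j|^2+|T_j|^2))$ picks up a factor $e^{-\Delta\mathcal{H}(\epsilon)}$, and the change-of-variables identity
\begin{equation*}
\EE[\norm{G(x,y;z)}^s] \;=\; \EE\!\left[\norm{G^\epsilon(x,y;z)}^s\,J^\epsilon\,e^{-\Delta\mathcal{H}(\epsilon)}\right]
\end{equation*}
holds for every $\epsilon$. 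Averaging the identity between $+\epsilon$ and $-\epsilon$ applies Pfister's cancellation: the linear-in-$\epsilon$ contribution to $\Delta\mathcal{H}$ vanishes, leaving a quadratic discrete Dirichlet cost
\begin{equation*}
\tfrac12\Delta\mathcal{H}(\epsilon)+\tfrac12\Delta\mathcal{H}(-\epsilon)\;\lesssim\;\epsilon^2\,W\sum_{j=x}^{y-1}\norm{A_{j+1}-A_j}_{\mathrm{HS}}^2 \;+\; (\text{adaptive error}).
\end{equation*}

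The adaptive element is crucial: on a high-probability set of realizations one can choose the $A_j$ smoothly in $(V_j,T_j)$, with prescribed non-trivial boundary values at $j=x$ and $j=y$, so that $\log(\norm{G^\epsilon(x,y;z)}/\norm{G(x,y;z)})$ is macroscopic for $\epsilon$ of order $1$. Linear interpolation of the boundary generators across the $L$ intermediate blocks makes the Dirichlet cost of order $W/L$ per interpolated direction; combined with the need to coordinate the shift over the $W$-dimensional internal structure of each block in order to guarantee the macroscopic motion of $\log\norm{G}$, the overall exponent that appears in the fractional-moment bound comes out to $L/W^3$. The quadratic (rather than linear) scaling of the Pfister-cancelled cost is precisely what converts the earlier $N^{1/7}$ result of \cite{10.1093/imrn/rnx145} into the present $N^{1/4}$ threshold. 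Pathological realizations (e.g.\ a nearly-singular $T_j$ or near-spectral coincidence between adjacent blocks) are excised by smooth cutoffs $\chi_j$ whose tails are controlled by standard Ginibre and GUE large-deviation estimates.

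The main obstacle I foresee is implementing the adaptive cutoff so as to simultaneously preserve the Pfister cancellation, keep the Dirichlet cost tight in $W$, and retain the macroscopic lower bound on the Green's-function motion. Because the generators $A_j$ depend on the random matrices through the cutoffs, the change-of-variables produces extra Jacobian factors and cross-terms in $\Delta\mathcal{H}$ that must either be absorbed into the quadratic estimate or integrated against the Gaussian density using the smoothness of $\chi_j$; striking the right balance between cutoff sharpness, interpolation smoothness, and the number of shift directions is what ultimately determines the final exponent $W^3$.
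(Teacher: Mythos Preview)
Your proposal correctly identifies the overall architecture: an \emph{a priori} fractional-moment bound, a reduction to a lower bound on the fluctuations of $\log\norm{G}$, and a Pfister-cancelled Mermin--Wagner shift with adaptive cutoffs to produce that lower bound. But the implementation you sketch diverges from the paper in a way that leaves a real gap, and your accounting for the exponent $W^3$ is not grounded in the actual mechanism.

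The paper does \emph{not} use an additive matrix-valued shift profile $\{A_j\}$ with a discrete Dirichlet cost $\sum_j\norm{A_{j+1}-A_j}_{\mathrm{HS}}^2$, and there is no linear interpolation across the $L$ blocks. Instead, the key step you are missing is the Schur-complement factorization
\[
\calG_n \;=\; \Gamma_1^{-1}T_1^\ast\Gamma_2^{-1}T_2^\ast\cdots T_{n-1}^\ast\Gamma_n^{-1},\qquad \Gamma_j=V_j-z-T_{j-1}\Gamma_{j-1}^{-1}T_{j-1}^\ast,
\]
after a change of integration variables $V_j\mapsto\Gamma_j$. With this factorization the natural shift is the \emph{scalar multiplicative} one $T_j\mapsto e^{\pm\delta F_j}T_j$, where $F_j\in[0,1]$ is a cutoff depending only on $(T_j,\Gamma_j^{-1},\Gamma_{j+1},V_{j+1})$. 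This makes the effect on the observable exactly additive, $X_n^\pm = X_n \pm \delta\sum_j F_j$, with no need to ``coordinate the shift over the $W$-dimensional internal structure'' or to argue that $\log\norm{G^\epsilon}/\norm{G}$ is macroscopic; it simply is. The quadratic cost after Pfister cancellation is \emph{local}, not gradient-type: it comes from the $|V_j|^2=|\Gamma_j+z+T_{j-1}\Gamma_{j-1}^{-1}T_{j-1}^\ast|^2$ terms in the density and is bounded, thanks to the cutoffs, by $W\cdot n\cdot \delta^2\, W^2$. Setting this $\approx 1$ gives $\delta\approx (nW^3)^{-1/2}$ and hence $\alpha=\delta\cdot\phi n\approx\sqrt{n/W^3}$; that is the origin of $\sharp=3$. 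Your interpolation-and-direction-counting heuristic does not reproduce this, and without the factorization it is unclear how you would control the effect of a general matrix shift on $\log\norm{G}$ at all.

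A second, smaller gap: you write ``reduce the problem to a macroscopic lower bound on the fluctuations'' without saying how. The paper uses a specific identity, $\ex{Y^r}=\ex{Y^s}^{r/s}\exp\bigl(-\int_0^s f_{r,s}(q)\,\VV_q[\log Y]\,dq\bigr)$, together with the \emph{a priori} bound, so that a variance lower bound $\VV_q[X_n]\gtrsim n/W^3$ under the tilted measure $\PP_q$ directly yields the exponential decay. The MW inequality you then need is the Pfister-type comparison $\PP_q[|\overline{X_n}|\le\alpha]\le\beta\sqrt{\PP_q[\overline{X_n}\ge 2\alpha]\PP_q[\overline{X_n}\le -2\alpha]}+\ve$, established on a high-probability event $M_\phi=\{F\ge\phi n\}$; your change-of-variables identity is one step toward this but does not by itself give the variance bound.
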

	
	We refer the reader to \cref{sec:Generalizations} for generalizations of this model which are also covered by the same method of proof.
	
	\begin{rem}[Localization at asymptotically large energies]  For $n\lesssim \ee^W$, the spectrum of $H$ is contained in a fixed compact interval with good probability.  To see this, note that the norm of $H$ is bounded by $\norm{H} \le \max_j \norm{V_j} + 2 \max_j \norm{T_j}$.  For large $W$ one has 
	$$ \prob{ \norm{V_j} > 2+t} \lesssim \ee^{-ct} \quad \text{and} \quad \prob{ \norm{T_j} > 1+t} \lesssim \ee^{-ct}  .$$  Thus with probability at least $(1-\ee^{-ct/2})^{2n}$ we have $\norm{H} \le 4+t$.  For energies $|z|>3$, one may obtain in place of \cref{eq:main fractional moment bound} the stronger estimate 
	$$\ex{\norm{G(x,y;z)}^s} \leq W^C \exp\left (-\mu\frac{|x-y|}{\log W} \right ) $$
	by using the Combes-Thomas bound \cite[Theorem 10.5]{AizenmanWarzel2016} to control the Green's function over  long intervals for which $z$ is not in the spectrum of a local restriction of $H$. The details of such an argument are quite similar to the ``finite volume criteria'' used to prove localization for Anderson's model in the Lifschitz tails regime, e.g., in \cite{Aizenman2001}, with the minor change that the \emph{a priori} bound on the Green's function (see \cref{lem:a-priori bound}) has a factor of $W^{s}$, leading to the $\log W$ localization length.   As a result, the main estimate \cref{eq:main fractional moment bound} can, in fact, be extended to all $z\in \RR$. Since our main purpose is in estimating the much longer localization length in the bulk part of the spectrum we omit further details here.
	\end{rem}
	\begin{rem}[Localization at complex energies]  We also remark that it is possible to extend the localization estimate to complex energies.  Since $\ex{\norm{G(x,y;z)}^s}$ is a subharmonic function of $z$ in the upper and lower half planes, its value at $z\notin \RR$ may be bounded by its Poisson integral real axis, which in turn may be bounded by the extension of \cref{eq:main fractional moment bound} to all real energies as outlined above \eq{
	\ex{\norm{G(x,y;E+\ii \ve)}^s} \leq \frac{1}{\pi}\int_{\lambda\in\RR}\ex{\norm{G(x,y;\lambda)}^s}\Im{\frac{1}{\lambda-E-\ii\ve}}\dif{\lambda}\,.
	}
	\end{rem}
	
	The two preceding remarks, combined with the main theorem \cref{thm:main localization theorem} readily imply \cref{thm:first loc theorem}.
	
	We will in general not keep track of the polynomial $W$ dependence in the estimates below (we are mainly interested in the localization length), and so do not report on the explicit value of $C$ in \cref{eq:main fractional moment bound} which our proof yields, though in principle one may do so.

	By \cref{thm:main localization theorem} and \cite[Theorem A.1]{Aizenman2001}, we readily conclude the following
	\begin{cor}[Eigenvector localization]
		\label{cor:loc}
		Let $H$ be defined as in \cref{eq:Hamiltonian}, and let $\psi_i$, with $i\in\Set{1,\dots,N}$, be the orthonormal eigenvectors of $H$. Then there are constants $D<\infty,\nu>0$, independent of $n$ and $W$, such that for any $i,j\in\Set{1,\dots,nW}$ it holds
		\eql{\label{eq:correlator bound}
			\ex{\sum_{k\in \Set{1,\dots,N}}|\psi_k(i)\psi_k(j)|} \le W^D \exp\left(-\nu \frac{|i-j|}{W^{\sharp+1}}\right),
		}
		with $\sharp = 3$.
	\end{cor}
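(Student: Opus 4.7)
The plan is to derive the eigenfunction correlator bound from the resolvent estimate by invoking the standard machinery of Aizenman--Molchanov, as codified in \cite[Theorem~A.1]{Aizenman2001}. That result says, schematically, that whenever one has a fractional-moment bound of the form $\sup_{\eta>0}\EE[|G(i,j;E+\ii\eta)|^s]\le A\exp(-|i-j|/\xi)$, uniformly over $E$ in some interval $I$, then the restricted eigenfunction correlator $\EE[\sum_{k:\,E_k\in I}|\psi_k(i)\psi_k(j)|]$ obeys the same type of bound, possibly with a larger constant and a fractionally smaller rate. Hence the task splits into three pieces: (i) promote \cref{thm:main localization theorem} to a bound on scalar matrix elements $G_{ij}$ with the correct localization length; (ii) extend it from bulk energies to all $E\in\RR$ with controlled $\eta\downarrow0$ limits; and (iii) feed the result into the correlator estimate.

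For (i), fix $i,j\in\Set{1,\dots,nW}$ and let $x=\lceil i/W\rceil$, $y=\lceil j/W\rceil$ be the associated block indices, so that $|G_{ij}(z)|\le\norm{G(x,y;z)}$. Since the block separation satisfies $|x-y|\ge (|i-j|-W)/W$, we deduce
\eq{
\EE\bigl[|G_{ij}(z)|^s\bigr]\le W^C\exp\!\left(-\mu\frac{|x-y|}{W^3}\right)\le W^{C}\ee^{\mu/W^3}\exp\!\left(-\mu\frac{|i-j|}{W^4}\right),
}
which is the desired decay with localization length $W^{\sharp+1}=W^4$. This accounts for the extra factor of $W$ appearing in \cref{eq:correlator bound}.

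For (ii), I would combine the two remarks immediately following \cref{thm:main localization theorem}: the Combes--Thomas argument upgrades \cref{eq:main fractional moment bound} to the range $|z|\ge 3$ (with an even shorter localization length $\log W$), covering together with the bulk bound all real energies; and the subharmonicity of $z\mapsto\EE[\norm{G(x,y;z)}^s]$ in the upper and lower half-planes, combined with the Poisson integral representation, extends the estimate uniformly to all $z$ with $\Im z\neq 0$. Consequently, $\sup_{\eta>0}\EE[|G_{ij}(E+\ii\eta)|^s]\le W^{C'}\exp(-\mu|i-j|/W^4)$ uniformly in $E\in\RR$.

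For (iii), apply \cite[Theorem~A.1]{Aizenman2001} with the interval $I=\RR$, which converts the uniform resolvent bound into \cref{eq:correlator bound}. The only place requiring attention is that the Aizenman--Molchanov correlator inequality typically introduces a loss in the exponent proportional to $(1-s)$, which is harmless since we may simply choose $\nu<\mu$ and adjust $D$ accordingly. The main (mild) obstacle is the uniformity in the spectral parameter $\eta$ that is required by the correlator machinery; this is overcome precisely by the subharmonicity extension, which is why the two remarks following \cref{thm:main localization theorem} are pre-positioned in the text. Once uniformity is in hand, the corollary is immediate.
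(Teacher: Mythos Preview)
Your proposal is correct and follows essentially the same route as the paper, which does not give an explicit proof but simply states that the corollary follows from \cref{thm:main localization theorem} together with \cite[Theorem~A.1]{Aizenman2001}. Your steps (i)--(iii) are precisely the unpacking of that one-line justification, with the extension to all energies supplied by the two remarks following \cref{thm:main localization theorem}, exactly as the paper intends.
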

	
	\begin{rem}
	    As is well known \cite{AizenmanWarzel2016}, the \emph{eigenvector correlation bound} \cref{eq:correlator bound} implies directly that
	    $$ \ex{\sup_{|f|\le 1} \left|\ip{\delta_i}{f(H)\delta_j}\right| } \ \le \ W^D \exp\left(-\nu \frac{|i-j|}{W^{\sharp+1}}\right) \ ,$$
	    where the supremum is taken over all Borel measurable $f:\RR \to \RR$ satisfying $|f(x)|\le 1$ everywhere.  In particular, this implies strong dynamical localization
	    $$ \ex{\sup_{t} \left|\ip{\delta_i}{\ee^{\ii t H} \delta_j}\right| } \ \le \ W^D \exp\left(-\nu \frac{|i-j|}{W^{\sharp+1}}\right) \ .$$
	\end{rem}
	
	Note that \cref{cor:loc} implies exponential localization of all the eigenvectors of $H$ when its band width is of size $W\ll N^{1/4}$. Improving \cref{eq:main fractional moment bound} to $\sharp=1$ would amount to proving the $\sqrt{N}$-conjecture from the localization side. While the $\sqrt{N}$-conjecture is formulated using the asymptotic parameter $N$, we find it more convenient to think of the system as having $n$ slices each of size $W$, and hence we measure distances between slices. As such, the $\sqrt{N}$-conjecture is tantamount to localization up to $W \approx n$ (this is also the reason for the fact that $\sharp$ in \cref{eq:main fractional moment bound} is ``a power off'').


		\section{Proof of \cref{thm:main localization theorem}}
		\label{sec:sp}
		
		In this section we explain the main steps of the proof of \cref{thm:main localization theorem}.

		\subsection{Reduction to a lower bound on logarithmic-fluctuations}
		\label{sec:logfluca}
		Using finite rank perturbation theory and the a-priori bound \cref{lem:a-priori bound}, it suffices to prove \cref{eq:main fractional moment bound} for $x=1$ and $y=n$. The energy $z$ plays very little role in our analysis and it is convenient to keep it implicit in many formulas. We thus define,
		\eql{
			\label{eq:gn}
			\calG_n := G(1,n;z)\,,\qquad X_n := \log\left(\norm{\calG_n}\right)\,.
		}
		The reason why we have defined the logarithm of the Green's function is best explained by the following lemma \cite[Proposition 3]{Schenker2009}:
		\begin{lem}\label{lem:log variance}
			Let $0<r<s<1$ and $Y\geq0$ be a random variable. Then
			\eql{
				\label{eq:logarithmic variance}
				\ex{Y^r} = \ex{Y^s}^{r/s} \exp\left(-\int_{0}^{s}f_{r,s}(q)\,\VV_{q}\left[\log\left(Y\right)\right]\dif{q}\right)\,.
			}
			where $$f_{r,s}(q) := \frac{1}{s}\min\left(\Set{r,q}\right)\left(s-\max\left(\Set{r,q}\right)\right)\qquad(q\in(0,s))\,.$$
			Here $\VV_{q}$ denotes the variance with respect to the $q$-weighted-probability measure
			\eql{
				\VV_{q}[X]=\mathbb{E}_q[\left(X-\mathbb{E}_q[X]\right)^2]\,,\qquad \mathbb{E}_q[\cdot]:=\frac{\mathbb{E}[\cdot \ee^{qX}]}{\mathbb{E}[e^{qX}]}\,.
			}
		\end{lem}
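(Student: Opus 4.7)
The plan is to recognize the right-hand side as a Green's-function identity for the log moment generating function of $X := \log Y$. Set
\eq{
F(q) \ := \ \log\ex{Y^q} \ = \ \log\ex{\ee^{qX}}\qquad (q\in[0,s]),
}
assuming, as one may by a standard truncation/dominated convergence argument, that $\ex{Y^s}<\infty$, which makes $F$ real-analytic on $(0,s)$. A direct differentiation under the integral sign yields
\eq{
F'(q) \ = \ \mathbb{E}_q[X], \qquad F''(q) \ = \ \VV_q[X],
}
with $\mathbb{E}_q$ the tilted expectation in the statement; these are just the standard cumulant identities.

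Next I would introduce the affine correction $G(q) := F(q) - \tfrac{q}{s}F(s)$, which vanishes at both endpoints $q=0$ and $q=s$ while satisfying $G''(q) = \VV_q[X]$. The Green's function for the boundary value problem $-u'' = \phi$ on $(0,s)$ with Dirichlet conditions $u(0)=u(s)=0$ is the familiar tent kernel
\eq{
K(r,q) \ = \ \tfrac{1}{s}\min(r,q)\bigl(s-\max(r,q)\bigr) \ = \ f_{r,s}(q),
}
so applying this representation to $-G$ gives
\eq{
G(r) \ = \ -\int_0^s f_{r,s}(q)\,\VV_q[X]\,\dif q.
}
Rearranging $F(r) = \tfrac{r}{s}F(s) + G(r)$ and exponentiating produces exactly \cref{eq:logarithmic variance}.

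The only potentially delicate point is the justification of smoothness and of differentiation under the integral sign, which reduces to observing that $q\mapsto \ex{Y^q}$ is log-convex (by H\"older) and finite on $[0,s]$, hence $C^\infty$ on the open interval $(0,s)$ with derivatives given by the above formulas. The degenerate cases $\PP[Y=0]>0$ or $\ex{Y^s}=0$ are handled by continuity or are trivially consistent with both sides being $0$ (respectively $\infty$). Beyond this verification there is no real obstacle: the lemma is essentially the statement that a $C^2$ function on $[0,s]$ with prescribed boundary values and known second derivative is reconstructed by its Dirichlet Green's function, applied to $\log\ex{Y^q}$.
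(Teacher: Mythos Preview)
Your proposal is correct and follows essentially the same route as the paper's proof: both hinge on the identity $\partial_q^2\log\ex{\ee^{qX}}=\VV_q[X]$ and then recover $F(r)$ from $F''$ via two integrations by parts. The paper carries out the integration by parts explicitly after splitting $[0,s]$ at $q=r$, whereas you package that computation as the Dirichlet Green's function on $[0,s]$; the tent kernel $f_{r,s}$ is precisely that Green's function, so the two arguments are the same in different notation.
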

		For completeness we give a simple proof of \cref{lem:log variance} in \cref{sec:technical}. 
		
		The next crucial ingredient is the \emph{a-priori} bound on fractional moments of $\calG_n$: 
		\begin{lem}[a-priori bound]\label{lem:a-priori bound}
			For all $s\in(0,1)$ there exists $C_s<\infty$ such that
			\eql{\label{eq:a-priori bound}
				\sup_n \ex{\norm{\calG_n}^s} \leq C_s W^s\,.
			}	
		\end{lem}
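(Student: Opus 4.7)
My plan is to combine a Wegner-type single-block estimate with a block Schur complement identity.

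First, I would establish a single-block lemma: for any deterministic $W\times W$ matrix $A$ and $V$ distributed as GUE with density $\propto \ee^{-W\tr V^2}$, $\ex{\norm{(V-A)^{-1}}^s} \leq C_s W^s$ for all $s\in(0,1)$. The key input is a Wegner-type estimate: $\PP[\sigma_{\min}(V-A) \le \ve] \le CW\ve$ uniformly in $A$, which holds because the entries of $V$ have bounded Gaussian density. The layer-cake representation $\EE[X^s] = s\int_0^\infty t^{s-1}\PP[X>t]\,dt$ then produces the factor $W^s$.

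Second, the block Schur complement gives $\calG_n = (V_1 - z - \Sigma_1)^{-1} T_1^* G^{(1)}(2,n;z)$, where $\Sigma_1 = T_1^* G^{(1)}(2,2;z)T_1$ and $G^{(1)}$ denotes the Green's function of $H$ with the first block-row and block-column removed; both $\Sigma_1$ and $G^{(1)}$ are independent of $V_1$. Conditioning on everything except $V_1$ and applying the single-block lemma to the factor $(V_1-z-\Sigma_1)^{-1}$ yields the $W^s$ factor on the right scale.

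The main obstacle is closing the argument without incurring exponential-in-$n$ growth: a direct recursive application of the Schur complement would multiply a factor of $W^s$ at each step. My strategy is to exploit the block positivity of $\Im{G}$ for $\Im{z} > 0$. Specifically, the Ward identity gives $\norm{G(1,n;z)}^2 \leq \norm{(GG^*)_{11}} = \norm{\Im{G(1,1;z)}}/\Im{z}$, which bounds the off-diagonal block in terms of a single diagonal block, reducing the off-diagonal problem to a diagonal one. The resulting $1/\Im{z}$ singularity, which would naively blow up in the $\Im{z}\to 0^+$ limit relevant to the theorem, is absorbed by the regularization from $\Im{\Sigma_1}\ge 0$, namely $\norm{(V_1-z-\Sigma_1)^{-1}} \leq \norm{(\Im{z}\,\Id + \Im{\Sigma_1})^{-1}}$. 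Managing this limit uniformly in $n$ is the delicate step and relies on the specific GUE structure of the $V_j$'s and their independence from the Ginibre hopping matrices $T_j$.
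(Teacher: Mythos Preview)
The paper does not prove this lemma itself but cites the rank-$2$ finite-rank perturbation argument (Graf, Schenker, and the matrix-Wegner estimate of \cite{doi:10.1142/S0219199717500286}). Your single-block Wegner input (step~1) is correct and is the same basic ingredient used there.

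Your step~3, however, has a genuine gap. The Ward identity gives the deterministic bound $\norm{G(1,n;z)}^2 \le \norm{\Im{G(1,1;z)}}/\Im{z}$, but the $1/\Im{z}$ is \emph{not} absorbed by $\Im{\Sigma_1}$. Writing $G(1,1)=(A-\ii B)^{-1}$ with $B=\Im{z}\,\Id+\Im{\Sigma_1}$, one has $\Im{G(1,1)}=G(1,1)\,B\,G(1,1)^*$, so
\[
\frac{\Im{G(1,1)}}{\Im{z}} \;=\; G(1,1)G(1,1)^* \;+\; G(1,1)\,\frac{\Im{\Sigma_1}}{\Im{z}}\,G(1,1)^*\,.
\]
Applying Ward once more, the second term contains $\Im{\Sigma_1}/\Im{z} = T_1^*\bigl(\sum_k G^{(1)}(2,k)G^{(1)}(2,k)^*\bigr)T_1$, i.e.\ exactly the uncontrolled off-diagonal sum you were trying to avoid. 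Since the lemma is stated at real $z$, you need the $\Im{z}\to 0^+$ limit; but for a finite matrix $\Im{\Sigma_1}\to 0$ at a.e.\ real $z$, so the ``regularization'' you invoke vanishes in precisely that limit. The deterministic inequality $\norm{G(1,1)}\le\norm{(\Im{z}\,\Id+\Im{\Sigma_1})^{-1}}$ is correct but unhelpful for the same reason: its right-hand side blows up.

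The argument the paper cites sidesteps this entirely by extracting \emph{two} blocks at once. The Feshbach--Schur map on the pair $\{1,n\}$ (for $n\ge 3$) gives
\[
\begin{pmatrix} G(1,1) & G(1,n) \\ G(n,1) & G(n,n) \end{pmatrix}
\;=\; \left( \begin{pmatrix} V_1 & 0 \\ 0 & V_n \end{pmatrix} - M \right)^{-1},
\]
with $M$ a (non-Hermitian) $2W\times 2W$ matrix independent of $V_1,V_n$. Then $\norm{\calG_n}$ is dominated by the norm of this $2W\times 2W$ inverse, and your step-1 Wegner estimate applied to the pair $(V_1,V_n)$ yields $\EE_{V_1,V_n}\bigl[\norm{\cdot}^s\bigr]\le C_s W^s$ uniformly in $M$, hence uniformly in $n$, with no $\Im{z}$ appearing anywhere.
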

		Such bounds for the resolvents of random operators are by now ``classical'' in the literature and have appeared many times elsewhere, starting from \cite{Aizenman_Molchanov_1993_cmp/1104253939}, without the supremum and for the case $n=1$ using (rank-1) finite rank perturbation theory, and then in \cite[Lemma 5]{Graf1994} for $n\geq 1$ using (rank-2) finite rank perturbation theory. For our purposes this is essentially \cite[Eq. (1.7)]{Schenker2009}, but as stated here with the optimal $W^s$ factor, can be derived from \cite[Eq. (1.7)]{doi:10.1142/S0219199717500286} (the optimal factor is not important for our proof since any polynomial factor in $W$ is negligible for \cref{thm:main localization theorem}).

		Combining \cref{lem:a-priori bound} and \cref{lem:log variance}, we see that proving \cref{thm:main localization theorem} reduces to establishing the bound 
		\eql{
			\label{eq:necessary logarithmic fluctuations bound}
			\VV_{q}\left[X_n\right] \gtrsim D_q\frac{n}{W^\sharp}
		}
		for some $q$-dependent constant $D_q\in(0,\infty)$. In fact, since the integrand in the exponential in \cref{eq:logarithmic variance} is positive, to get a lower bound we may restrict the integration to \eql{\label{eq:restriction of q to half the interval}q\in\left[\frac{s}{2},s\right]\,.}
		\subsection{The Mermin--Wagner route to a lower bound on fluctuations}
		Our main tool to establish a lower bound on fluctuations is the following
		\begin{lem}
			\label{lem:MW lower bound on fluctuations}Let $X$ be a real-valued random variable
			distributed according to the measure $\mathbb{P}$ and such that there
			are some $0<\alpha<a$ and $\ve\in\left(0,1\right)$, $\beta\in(0,\infty)$ with which
			\eql{\label{eq:MW hypothesis}
				\PP\left[\{|X|\leq\alpha\}\right] \leq \beta\sqrt{\PP\left[\{X\geq  a\}\right]\PP\left[\{X\leq -a\}\right]}+\ve\,.
			}
			Then the following lower bound holds:
			\eql{
				\ex{X^2} \geq \frac{1-\varepsilon}{1+\frac{1}{2}\beta}\alpha^{2}\,.
			}
		\end{lem}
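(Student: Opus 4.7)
The plan is a short calculation, really a packaged Chebyshev-type argument. First I would apply the AM--GM inequality to the square-root term in the hypothesis to linearize it, obtaining
\eq{
\PP\left[\{|X|\leq\alpha\}\right] \ \leq\ \tfrac{\beta}{2}\bigl(\PP\left[\{X\geq a\}\right]+\PP\left[\{X\leq -a\}\right]\bigr)\,+\,\ve\,.
}
Second, since $\alpha<a$ the events $\{X\geq a\}$ and $\{X\leq -a\}$ are disjoint and contained in $\{|X|>\alpha\}$, so
\eq{
\PP\left[\{X\geq a\}\right]+\PP\left[\{X\leq -a\}\right] \ \leq\ \PP\left[\{|X|>\alpha\}\right] \ =\ 1-\PP\left[\{|X|\leq\alpha\}\right].
}
Combining these two displays gives a one-variable affine inequality in $p:=\PP[\{|X|\leq\alpha\}]$, namely $p\leq \tfrac{\beta}{2}(1-p)+\ve$, which rearranges to
\eq{
\PP\left[\{|X|>\alpha\}\right] \ =\ 1-p \ \geq\ \frac{1-\ve}{1+\tfrac{1}{2}\beta}\,.
}
Finally I would invoke the elementary bound $\EE[X^2]\geq \alpha^2\,\PP[\{|X|>\alpha\}]$ (which follows because $X^2\geq \alpha^2\,\mathds{1}_{\{|X|>\alpha\}}$ pointwise) and substitute the lower bound on $\PP[\{|X|>\alpha\}]$ to conclude.

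There is essentially no obstacle here beyond keeping the bookkeeping of strict-versus-nonstrict inequalities honest: the gap $\alpha<a$ ensures that the three events $\{|X|\leq\alpha\}$, $\{X\geq a\}$, and $\{X\leq -a\}$ are mutually disjoint, which is all that is needed. All of the real work of the paper lies in \emph{verifying} the hypothesis \cref{eq:MW hypothesis} for $X=X_n$ (or a centred/shifted version of it) via the adaptive Mermin--Wagner shift; the present lemma merely converts such a hypothesis into the desired variance lower bound \cref{eq:necessary logarithmic fluctuations bound}.
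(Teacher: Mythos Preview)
Your proof is correct and is essentially identical to the paper's own argument: both apply AM--GM to linearize the square-root, use the inclusion $\{X\geq a\}\cup\{X\leq -a\}\subseteq\{|X|>\alpha\}$, solve the resulting affine inequality for $\PP[\{|X|>\alpha\}]$, and finish with the Chebyshev-type bound $\EE[X^2]\geq\alpha^2\PP[\{|X|>\alpha\}]$. The only cosmetic difference is that the paper phrases the chain in terms of $\PP[[-\alpha,\alpha]^c]$ from the outset whereas you first manipulate $p=\PP[\{|X|\leq\alpha\}]$ and then pass to $1-p$.
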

		As phrased this lemma is inspired by \cite{Milos2015}, though it goes all the way back to the proof of the Mermin--Wagner theorem in the context of classical statistical mechanics by Pfister \cite{Pfister_1981_cmp/1103908962}. Its simple proof is postponed to \cref{sec:technical}.

		We define the centered observable $ \overline{X_n} := X_n - \EE_q\left[X_n\right]$ with which our main goal for the rest of the paper is thus to prove
		\begin{prop}\label{prop:main MW estimate} Let $\xi >0$ and $s\in(0,1)$ be fixed parameters independent of $n$ and $W$.  For the variables defined in \cref{eq:gn} above, if $n$ and $W$ are sufficinetly large and 
		\eql{\label{eq:assumption which if false proves the main theorem} \ex{\ee^{\frac{s}{2}X_n}} \ge \ee^{-\xi n} \ ,
		}
		then for all $q\in(s/2,s)$ we have
			\eql{\label{eq:The MW bound} \PP_q\left[\Set{|\overline{X_n}| \leq \alpha }\right] \leq \beta \sqrt{\PP_q\left[\Set{\overline{X_n} \geq 2\alpha}\right] \PP_q\left[\Set{\overline{X_n} \leq -2\alpha}\right]} + \ve} where $\beta \approx 1$, $\ve < 1/2$, and \eq{\alpha \approx \sqrt{\frac{n}{W^\sharp}} \ }
			with $\sharp\ge 3$.
		\end{prop}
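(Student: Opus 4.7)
The plan is to prove \cref{eq:The MW bound} by constructing an adaptive Mermin--Wagner shift in the spirit of the $XY$-model discussion around \cref{eq:Pfister}. Given a shift sequence $u = (u_1,\dots,u_n)\in\RR^n$, which may depend on the realization $\omega$, I introduce two maps $\Phi_\pm^u$ on the probability space of the $(V_j, T_j)$ that act by perturbing each $V_j$ (and/or $T_j$) by a rank-one tensor of amplitude $\pm u_j$ in a direction aligned with the leading singular vectors of $\calG_n$. The construction must secure two ingredients. First, on a good event $\calA$ of $\PP_q$-probability at least $1-\ve$, the observable increases as $X_n(\Phi_+^u\omega) - X_n(\omega) \geq \Delta(u)$, with $\Delta(u)$ linear in $u$, of order a combination such as $u_1 - u_n$ or $\sum_j u_j$. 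Second, the logarithm of the Radon--Nikodym derivative of $(\Phi_\pm^u)_\ast\PP_q$ with respect to $\PP_q$ decomposes as $\pm L(u,\omega) + Q(u) + (\text{subleading})$, where $L$ is linear in $u$ (and may be large), while $Q$ is a symmetric quadratic form of order $W\sum_j(u_{j+1}-u_j)^2 + W\sum_j u_j^2$, up to additional matrix-structure factors.

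With these ingredients in hand, Pfister's cancellation becomes a Cauchy--Schwarz step. Arranging $\Delta(u)\geq 3\alpha$ on $\calA$, the map $\Phi_+^u$ sends $\calA\cap\Set{|\overline{X_n}|\leq\alpha}$ into $\Set{\overline{X_n}\geq 2\alpha}$, while $\Phi_-^u$ sends it into $\Set{\overline{X_n}\leq -2\alpha}$. Integrating the indicator of $\calA\cap\Set{|\overline{X_n}|\leq\alpha}$ against the geometric mean of the two Jacobians and applying Cauchy--Schwarz, the $\pm L$ contributions cancel in the square root, leaving
\eq{
\PP_q\left[\Set{|\overline{X_n}|\leq\alpha}\right] \leq \ee^{Q(u)/2}\sqrt{\PP_q\left[\Set{\overline{X_n}\geq 2\alpha}\right]\PP_q\left[\Set{\overline{X_n}\leq -2\alpha}\right]} + \ve.
}
The hypothesis \cref{eq:assumption which if false proves the main theorem} enters here by ensuring that $\PP_q$ is not too singular with respect to $\PP$, so that the large-deviation and concentration estimates used to construct $\calA$ may be transferred to the tilted measure; if \cref{eq:assumption which if false proves the main theorem} were to fail, $\EE[\|\calG_n\|^{s/2}]\leq \ee^{-\xi n}$ would directly yield \cref{thm:main localization theorem}.

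The parameter $\alpha$ is then determined by optimizing over $u$. Taking $u_j$ approximately linear in $j$ with amplitude $\alpha$, the gradient cost $W\sum_j(u_{j+1}-u_j)^2 \approx W\alpha^2/n$ balanced against $\Delta(u)\approx\alpha$ would naively suggest $\alpha\approx\sqrt{n/W}$, i.e.\ the conjectural $\sharp=1$. The actual $\sharp=3$ arises from additional factors of $W$ in $Q(u)$, coming from the matrix-valued nature of the rank-one perturbation, the Hermiticity constraint on the shift, and the dominance of the zero-th-order symmetric cost $W\sum_j u_j^2$ over the pure gradient cost. The improvement from $\sharp=6$ in \cite{10.1093/imrn/rnx145} to $\sharp=3$ here corresponds exactly to the factor-$2$ gain from Pfister's cancellation, which \emph{eliminates} rather than merely \emph{bounds} the linear term $L$.

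The main obstacle, treated in \cref{sec:MW} and the technical appendices, is the construction of $\calA$ together with the associated cut-offs: on blocks where $\|V_j\|$, $\|T_j\|$, or the singular-vector alignment is atypical, one must set $u_j := 0$. A large-deviation argument must keep $\PP_q(\calA^c)\leq\ve<1/2$; a combinatorial robustness argument must ensure $\Delta(u)\gtrsim\alpha$ even after these cut-offs have carved holes into the sequence $u$; and the Jacobian of the adaptive change of variables, whose $\omega$-dependence is substantially more delicate than in the classical Mermin--Wagner setting, must be controlled. Threading these three estimates simultaneously, while preserving both the macroscopic observable shift and the $O(1)$ cost bound, is the technical heart of the proof.
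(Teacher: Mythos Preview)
Your high-level architecture is right: a Pfister-style pair of shifts, Cauchy--Schwarz to cancel the linear term, an adaptive good event with cutoffs, and the hypothesis \cref{eq:assumption which if false proves the main theorem} used to compare $\PP_q$ with $\PP$ on the complement. But the concrete shift you propose is not the one the paper uses, and the one you propose would be significantly harder to close.

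The paper does \emph{not} perturb by rank-one tensors aligned with singular vectors of $\calG_n$. Instead it first changes variables $V_j\mapsto\Gamma_j$ so that $\calG_n=\Gamma_1^{-1}T_1^\ast\Gamma_2^{-1}\cdots T_{n-1}^\ast\Gamma_n^{-1}$ factorizes, and then applies a \emph{scalar dilation} of each hopping block, $T_j^\pm:=\ee^{\pm\delta F_j}T_j$, with $F_j\in[0,1]$ a product of cutoff functions in $\HSn{T_j}^2/W$, $\HSn{V_{j+1}}^2/W^2$, $\HSn{\Gamma_{j+1}}^2/W^2$, $\HSn{\Gamma_j^{-1}}^2/W^2$. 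Because each $T_j$ enters the factorization exactly once and multiplicatively, this gives the \emph{exact} identity $X_n^\pm=X_n\pm\delta\sum_jF_j$, with no singular-vector tracking at all. Your rank-one route would force you to control how the top singular vectors of $\calG_n$ propagate through the product and how they depend on all $2n-1$ blocks simultaneously; the resulting Jacobian would be genuinely global and the observable shift only an inequality rather than an identity.

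Correspondingly, the cost structure is not the gradient-plus-mass form $W\sum_j(u_{j+1}-u_j)^2+W\sum_ju_j^2$ you describe. There is no nearest-neighbour gradient here: the remainder $\calR=\frac12E(\Gamma,T^+)+\frac12E(\Gamma,T^-)-E(\Gamma,T)$ decouples over $j$, and the dominant piece is $\calR_j^{II}\lesssim\delta^2\HSn{T_{j-1}\Gamma_{j-1}^{-1}T_{j-1}^\ast}^2\lesssim\delta^2W^2$ on the support of the cutoffs. Summing $n$ blocks and multiplying by the $W$ in front of the energy gives $W\calR\lesssim nW^3\delta^2$, which is $O(1)$ precisely when $\delta\approx(nW^3)^{-1/2}$, i.e.\ $\alpha\approx\delta n\approx\sqrt{n/W^3}$. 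The $\sharp=3$ is thus dictated by the size of $\HSn{V_j}^2\lesssim W^2$ inside $\calR^{II}$, not by Hermiticity constraints or rank-one matrix structure. The good event is simply $M_\phi=\{F\ge\phi n\}$, and its complement is controlled by a union bound over blocks using the a-priori estimate on $\HSn{(V_j-A)^{-1}}$.
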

		
		We now discuss in more detail how the main result \cref{thm:main localization theorem} follows from \cref{prop:main MW estimate}.  First note that if \cref{eq:assumption which if false proves the main theorem} fails, then \cref{eq:main fractional moment bound} holds with $\sharp=0$ and $C=1$, which is stronger than the bound we seek to prove.  So, we may assume that \cref{eq:assumption which if false proves the main theorem} holds without loss of generality. Then \cref{prop:main MW estimate} and \cref{lem:MW lower bound on fluctuations} together show that \cref{eq:necessary logarithmic fluctuations bound} holds.  Using this estimate on the right hand side of \cref{eq:logarithmic variance} we find that 
		$$ \EE[ \ee^{r X_n} ] \le \EE[\ee^{sX_n}]^{r/s} \exp\left(- c \frac{n}{W^\sharp} \right ) \ , $$
		which implies \cref{eq:main fractional moment bound} after using the \emph{a priori} bound \cref{lem:a-priori bound} to estimate $\EE[\ee^{sX_n}]$.   Thus, \cref{thm:main localization theorem} will be proved once we obtain \cref{prop:main MW estimate}.
		
		\section{The proof of the main estimate, \cref{prop:main MW estimate}}
		\label{sec:MW}
		
		In this section we present the proof of our main technical result \cref{prop:main MW estimate}. Its proof is divided into two step: first in \cref{sec:fac} we perform a change of variables for the $V$'s in the density \cref{eq:density}, which yields a convenient factorization of the resolvent $\mathcal{G}_n$; then in \cref{sec:MWshift} we describe the Mermin--Wagner shift argument (inspired by \cite{Milos2015}) which yields \cref{eq:The MW bound}.

		\subsection{Factorization of $\calG_n$ and replacing $V$ with $\Gamma$}
		\label{sec:fac}
		
		Before starting with the actual proof of \cref{eq:The MW bound}, it is convenient to first factorize $\calG_n$ (this is performed closely along the lines of \cite[Section 3]{Schenker2009}):
		\begin{lem}
			Let $\mathcal{G}_n$ be defined in \cref{eq:gn}, then it holds that
			\begin{align}\label{eq:Green's function factorization} \calG_n = \Gamma_{1}^{-1}T_{1}^{\ast}\Gamma_{2}^{-1}T_{2}^{\ast}\dots\Gamma_{n-1}^{-1}T_{n-1}^{\ast}\Gamma_{n}^{-1}, \end{align} where \begin{align}\label{eq:what Gamma equals to which we actually use}\Gamma_1 := V_1-z\Id\,;\qquad\Gamma_j := V_{j}-z\Id-T_{j-1}\Gamma_{j-1}^{-1}T_{j-1}^{\ast} \qquad (j=2,\dots,n)\,.\end{align}
		\end{lem}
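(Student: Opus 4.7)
The plan is to prove the identity by induction on $n$, making repeated use of the block Schur complement formula. Let $H^{(j)}$ denote the restriction of $H$ to its top-left $jW\times jW$ block, so that $H^{(n)}=H$ and the Green's function $\calG_n$ of interest is the $(1,n)$-block of $(H^{(n)}-z\Id)^{-1}$.

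First I would establish the auxiliary identity $[(H^{(j)}-z\Id)^{-1}]_{j,j}=\Gamma_j^{-1}$ by induction on $j$. For $j=1$ this is trivial from $\Gamma_1=V_1-z\Id$. For the inductive step, split $H^{(j)}-z\Id$ as a $2\times 2$ block matrix whose top-left block is $H^{(j-1)}-z\Id$ and whose bottom-right block is $V_j-z\Id$; the off-diagonal blocks are supported entirely on the $(j-1)$-th sub-block of the first factor and contain $-T_{j-1}$ and $-T_{j-1}^{\ast}$. The Schur complement formula then identifies the $(j,j)$ diagonal block of the inverse as $(V_j-z\Id-T_{j-1}[(H^{(j-1)}-z\Id)^{-1}]_{j-1,j-1}T_{j-1}^{\ast})^{-1}$, which by the inductive hypothesis equals $\Gamma_j^{-1}$, matching \cref{eq:what Gamma equals to which we actually use}.

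Next I would use the same block decomposition, together with the Schur complement formula for the off-diagonal block of a $2\times 2$ block inverse (the top-right block equals $-A^{-1}B S^{-1}$), to obtain the recursion $\calG_j=\calG_{j-1}T_{j-1}^{\ast}\Gamma_j^{-1}$; here only the $(1,j-1)$-block of $(H^{(j-1)}-z\Id)^{-1}$ enters because the coupling $B$ is supported on the last block-row. Iterating from $j=n$ down to $j=2$ and using $\calG_1=\Gamma_1^{-1}$ yields \cref{eq:Green's function factorization}. Equivalently, one may encode the two inductions simultaneously as a block $LDU$ factorization $H-z\Id=LDU$ with $D=\operatorname{diag}(\Gamma_1,\dots,\Gamma_n)$ and $L,U$ block bi-diagonal unipotent, and then read off the $(1,n)$ block of $U^{-1}D^{-1}L^{-1}$.

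There is no real obstacle: once the Schur complement identities are written down the argument is mechanical. The only points demanding care are bookkeeping of the tensor structure (which block-row or block-column each factor $T_{j-1}$ or $T_{j-1}^{\ast}$ sits in), and justifying that the inverses $\Gamma_j^{-1}$ exist. The latter is automatic for $z\in\CC\setminus\RR$ because $\operatorname{Im}\Gamma_j$ inherits a definite sign from $\operatorname{Im} z$ through the recursion, and holds almost surely in the Gaussian ensemble \cref{eq:density} for real $z$, which is all that is needed since the formula is used inside expectations.
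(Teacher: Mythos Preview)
Your proposal is correct and follows essentially the same route as the paper: the paper invokes the resolvent identity $G_{[1,n]}(1,n;z)=G_{[1,n-1]}(1,n-1;z)\,T_{n-1}^{\ast}\,G_{[1,n]}(n,n;z)$ (your off-diagonal Schur step) and iterates it, then defines $\Gamma_j:=G_{[1,j]}(j,j;z)^{-1}$ and obtains \cref{eq:what Gamma equals to which we actually use} ``by the Schur complement formula'' (your diagonal induction). Your write-up is simply a more explicit unpacking of the same two Schur identities, with the added (welcome) remark on almost-sure invertibility of the $\Gamma_j$.
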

		\begin{proof}
			By the resolvent formula, we have $$ G_{\left[1,n\right]}\left(1,n;z\right)=G_{\left[1,n-1\right]}\left(1,n-1;z\right)T_{n-1}^{\ast}G_{[1,n]}(n,n;z) $$ where $H_{\left[x,y\right]}$ denotes the matrix $H$ restricted to be non-zero only between slices $x$ and $y$, and $G_{[x,y]}$ denotes its resolvent. Iterating this identity $n$ times, and defining \begin{align}\label{eq:def of Gamma}\Gamma_j := G_{\left[1,j\right]}\left(j,j;z\right)^{-1}\qquad(j=1,\dots,n),\end{align} we conclude \cref{eq:Green's function factorization}. Finally, the fact that $\Gamma_j$ can be written in the form \cref{eq:what Gamma equals to which we actually use} readily follows by the Schur complement formula.
		\end{proof}
		
		
		Since $\Gamma_j$ does not depend on $V_k$ for $k>j$, a change of variables $$ V_j \mapsto \Gamma_j \qquad(j=1,\dots,n)$$ is triangular and its Jacobian has determinant equal to $1$. After the change of variables, we thus obtain the density 
		\eq{ \frac{1}{Z_{n,W}} \exp\left(-W E(\Gamma,T) \right)\dif{\Gamma_1}\dots\dif{\Gamma_n}\dif{T_1}\dots\dif{T_{n-1}} 
		} where $\dif{\Gamma_j}$ is the Lebesgue measure on $\mathrm{Herm}_W(\CC)$ and $\dif{T_j}$ is the Lebesgue measure on $\Mat_W(\CC)$, and where we define the ``energy'' functional \eq{ E(\Gamma,T) \equiv \tr\left(|\Gamma_1+z\Id|^2+\sum_{j=2}^{n}|\Gamma_j+z\Id+T_{j-1} \Gamma_{j-1}^{-1}T_{j-1}^\ast|^2+\sum_{j=1}^{n-1}|T_j|^2\right)\,.}
		
		\subsection{Plus-minus collective deformations}
		\label{sec:MWshift}
		
		The main idea behind establishing a bound such as \cref{eq:The MW bound} is to perform a collective change of variables that has a minimal effect (quantified in our case by $\beta$) on the measure whereas the cumulative effect on the observable (i.e., $\chi_{[-\alpha,\alpha]} (X_n) $) is significant. This is done little by little, spread ``across the volume.'' Furthermore, the deformation is done simultaneously in two directions, designed precisely so that the linear term between the two cancels.
		
		There are many possible choices for the deformation; below is a relatively simple one which yields $\sharp = 3$. Part of the simplicity comes from the fact we are using the randomness of both the hopping terms $T$ and the onsite potential $V$. It is reasonable to guess that one could obtain the same result (with considerable complications to the proof) by using the randomness of only one of $T$ or $V$.
		
		We define the following transformation on the set of hopping matrices. Let \eq{\vf:\left[0,\infty\right)\to[0,1]} be a smooth function satisfying
		\eq{\chi_{[0,K]} \le \vf \le \chi_{[0,2K]} \quad \text{and} \quad |\vf'| \leq \chi_{[0,2K]}  \ ,}
		with $K>1$ a constant independent of $n$ and $W$ to be fixed below.
		Next,we define the shifts
		\eql{
			\label{eq:pm transformation}
			T_j^\pm := \exp\left(\pm\delta F_j\right) T_j \qquad(j=1,\dots,n-1)\,,
		}
		where $\delta>0$ is a parameter, which we choose depending on $n$ and $W$ below, and $F_j$ is a number given by
		\eql{\label{eq:F_j}
			F_j := \vf\left(\frac{\HSn{T_j}^2}{W}\right)\vf\left(\frac{\HSn{V_{j+1}}^2}{W^2}\right)\vf\left(\frac{\HSn{\Gamma_{j+1}}^2}{W^2}\right)\vf\left(\frac{\HSn{\Gamma_{j}^{-1}}^2}{W^2}\right),
		} 
		for any $j\in\Set{1,\dots,n}$. Since we consider $T$ and $\Gamma$ as integration variables rather than $T$ and $V$, in \cref{eq:F_j} we use $V_{j+1}$ only for convenience of notation; it should be understood as a function of both $T$ and $\Gamma$, i.e., \eq{ V_{j+1}=V_{j+1}(\Gamma_{j+1},\Gamma_j^{-1},T_j) = \Gamma_{j+1}+z\Id+T_j \Gamma_j^{-1}T_j^\ast\,. } Clearly, conditioned on the $\Gamma$ variables, $F_j$ is a function of $T_j$ alone (and no other $T_k$ for $k\neq j$), so that this transformation is diagonal in the variable $j$. We point out that this choice for \cref{eq:F_j} is what dictates $\sharp\geq3$, as will become clear below.

		It is useful to consider this change of variables abstractly using the maps $\eta^\pm:\Mat_W(\CC)\to\Mat_W(\CC)$ defined by
        \eql{ \notag \eta^\pm (A)  &\equiv \eta^\pm(A;G,\tilde{G}) \\  &= \exp\left(\pm \delta \gamma(G,\tilde{G})  \vf\left(\frac{\HSn{A}^2}{W}\right)\vf\left(\frac{\HSn{G+z\Id+A\tilde{G}A^\ast}^2}{W^2}\right)\right)A \label{eq:abstract eta}} where $G,\tilde{G}\in \Herm_W(\CC)$ and $\gamma(G,\tilde{G}):=\vf\left(\frac{\HSn{G}^2}{W^2}\right)\vf\left(\frac{\HSn{\tilde{G}}^2}{W^2}\right)$; $z\in\RR$ with $|z|\lesssim\sqrt{W}$. We will need the following lemma (whose proof is postponed to \cref{sec:Calculation of the eta Jacobian}).
        \begin{lem}\label{lem:eta is injective}
            For $\delta W\ll 1 $ the maps $\eta^\sigma:\Mat_W(\CC)\to\Mat_W(\CC)$ are injective for any choice of $G,\tilde{G}\in \Herm_W(\CC)$.
		\end{lem}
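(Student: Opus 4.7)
The plan is to reduce injectivity of $\eta^\sigma$ to strict monotonicity of a one-variable scalar function, which is then verified by a direct derivative bound; the only conceptually delicate step is cancelling an \emph{a priori} unbounded factor $A\tilde{G}A^\ast$ using the defining algebraic identity.

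First, observe that $\eta^\sigma(A)=\lambda(A)\,A$ where
\[\lambda(A):=\exp\!\bigl(\sigma\delta\,\gamma(G,\tilde{G})\,\vf(p(A))\,\vf(q(A))\bigr),\qquad p(A):=\tfrac{\HSn{A}^2}{W},\ q(A):=\tfrac{\HSn{G+z\Id+A\tilde{G}A^\ast}^2}{W^2},\]
and each $\vf$-factor lies in $[0,1]$. If $\gamma(G,\tilde{G})=0$ then $\eta^\sigma=\mathrm{id}$ and the statement is trivial, so we may assume $\gamma\neq 0$; this forces $\HSn{G},\HSn{\tilde{G}}\le\sqrt{2K}\,W$. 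Now suppose $\eta^\sigma(A_1)=\eta^\sigma(A_2)$ with $A_1\neq 0$. Since $\lambda(A_i)A_i$ is proportional to $A_i$, the equality forces $A_2$ to be a positive multiple of $A_1$, i.e.\ $A_2=tA_1$ for some $t>0$. Injectivity thus reduces to showing that the scalar function $\Phi(s):=s\,\lambda(sA_1)$ is strictly increasing on $(0,\infty)$.

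Differentiating,
\[\Phi'(s) = \lambda(sA_1)\Bigl[1+\sigma\delta\,\gamma\cdot s\tfrac{d}{ds}\bigl(\vf(p(sA_1))\vf(q(sA_1))\bigr)\Bigr],\]
so it suffices to prove $\bigl|s\tfrac{d}{ds}\bigl(\vf(p(sA_1))\vf(q(sA_1))\bigr)\bigr|\le C(K)$ uniformly. The chain rule gives $s\tfrac{d}{ds}p(sA_1)=2p(sA_1)$, which is at most $4K$ on the support of $\vf'(p)$. For the $q$-factor, writing $M(s):=G+z\Id+s^2 A_1\tilde{G}A_1^\ast$ one computes
\[s\tfrac{d}{ds}q(sA_1)=\tfrac{4s^2}{W^2}\Re{\tr\bigl(M(s)^\ast A_1\tilde{G}A_1^\ast\bigr)}.\]
The main obstacle is that $A_1\tilde{G}A_1^\ast$ has no a priori bound; the key algebraic identity $s^2 A_1\tilde{G}A_1^\ast=M(s)-(G+z\Id)$ converts this into
\[s\tfrac{d}{ds}q(sA_1)=4q(sA_1)-\tfrac{4}{W^2}\Re{\tr\bigl(M(s)^\ast(G+z\Id)\bigr)}.\]
On the support of $\vf'(q)$ we have $\HSn{M(s)}\le\sqrt{2K}\,W$; together with $\HSn{G+z\Id}\le\HSn{G}+|z|\sqrt{W}\lesssim W$ (using $\gamma\neq 0$ and $|z|\lesssim\sqrt{W}$), Cauchy--Schwarz bounds the remainder term by an absolute constant.

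Assembling gives $\bigl|s\tfrac{d}{ds}(\vf(p)\vf(q))\bigr|\le C(K)$ uniformly in $A_1,G,\tilde{G}$, so $\Phi'(s)\ge\lambda(sA_1)(1-C\delta)>0$ as soon as $\delta$ is small enough, in particular whenever $\delta W\ll 1$. Hence $\Phi$ is strictly increasing and $\eta^\sigma$ is injective. The main difficulty is the apparently unbounded $A\tilde{G}A^\ast$ contribution; it is resolved by the algebraic cancellation that expresses $s^2 A\tilde{G}A^\ast$ in terms of the controlled quantities $M(s)$ and $G+z\Id$, which is precisely what the structural form of $\eta^\sigma$ (matching $V_{j+1}=\Gamma_{j+1}+z\Id+T_j\Gamma_j^{-1}T_j^\ast$) makes possible.
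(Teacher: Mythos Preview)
Your proof is correct and takes a genuinely different route from the paper. The paper argues that $\Id-\eta^\sigma$ is a contraction by bounding the full Fr\'echet derivative: from $(\calD\eta^\sigma)_A=\ee^{\sigma\delta\Phi(A)}(\Id+\sigma\delta\,A\otimes Q_A^\ast)$ they use the estimate $\HSn{A}\HSn{Q_A}\lesssim W$ (Lemma~\ref{lem:technical estimate on A and Q_A}) to get $\norm{\Id-(\calD\eta^\sigma)_A}<1$ once $\delta W\ll 1$. You instead exploit the specific structure $\eta^\sigma(A)=\lambda(A)A$ with $\lambda>0$: since collisions can only occur along rays from the origin, injectivity reduces to strict monotonicity of the scalar map $s\mapsto s\lambda(sA_1)$, which you check by a direct one-variable derivative bound. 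Both approaches ultimately rely on the same algebraic cancellation $s^2A_1\tilde{G}A_1^\ast=M(s)-(G+z\Id)$ to tame the a~priori unbounded cubic term (this is precisely item~(3) in the proof of Lemma~\ref{lem:technical estimate on A and Q_A}). Your argument is more elementary and in fact yields injectivity already under the weaker hypothesis $\delta\ll 1$, since your bound $\bigl|s\tfrac{d}{ds}(\vf(p)\vf(q))\bigr|\le C(K)$ is $W$-independent; the paper's full derivative bound picks up the extra factor of $W$ because $Q_A$ is large in directions transverse to the ray. On the other hand, the paper's approach is not wasteful in context: the explicit computation of $\calD\eta^\sigma$ and of $Q_A$ is reused verbatim in Section~\ref{sec:the beta term bound} to control the product of the Jacobian determinants, which your argument does not touch.
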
 \noindent In \cref{eq:pm transformation} $\eta^\pm$ are used with
        $G:=\Gamma_{j+1},\tilde{G}:=\Gamma_j^{-1}\in\mathrm{Herm}_W(\CC)$.  By a slight abuse of notation, we will use $\eta^\sigma$, $\sigma=\pm$, to  also denote the corresponding product maps 
        \eq{\{ T_j\}_{j=1}^{n-1} \ \mapsto \ \{ \eta^\pm(T_j;\Gamma_{j+1},\Gamma_j^{-1})\}_{j=1}^{n-1}}
        on $\Mat_W(\CC)^{n-1}$.

		With this transformation, $X_n$ transforms as \eql{\label{eq:the pm transformation on X_n} X_n^\pm = X_n \pm \delta F } with $F := \sum_{j=1}^n F_j$. Another trivial but useful fact is that \eql{\label{eq:Breaking X_n into pm} X_n = \frac12 X_n^+ + \frac12 X_n^-\,. }
		By \cref{eq:the pm transformation on X_n} we get \eql{\label{eq:deformation of the set} S_\alpha := \Set{\overline{X_n}\in B_\alpha(0)} = \Set{X_n^\pm \in B_\alpha(\EE_q\left[X_n\right]\pm\delta F) } = \Set{\overline{X_n^\pm} \in B_\alpha(\pm\delta F) }} where $B_\alpha(t):= [t-\alpha,t+\alpha]$ and $\overline{X_n^\pm} := X_n^\pm - \EE_q\left[X_n\right]$.
		
		For any event $M$, we have 
		\eq{
			Q(M) \ &:= \ \PP_q\left[S_\alpha\cap M\right]\ex{\ee^{qX_n}} Z_{n,W} \\ 
			&= \int_{(\Gamma,T)\in M}\ee^{q X_n-WE(\Gamma,T)}\chi_{\Set{\overline{X_n}\in B_\alpha(0)}}\dif{\Gamma}\dif{T} \\
			&= \int_{(\Gamma,T)\in M}\ee^{+W\calR(\Gamma,T)}\prod_{\sigma\in\Set{\pm}}\sqrt{\ee^{q X_n^\sigma-WE(\Gamma,T^\sigma)}\chi_{\Set{\overline{X_n^\sigma} \in B_\alpha(\sigma\delta F) }}}\dif{\Gamma}\dif{T}  \ ,}
	    where we have defined the ``remainder term'' for the energy functional to be \eql{\label{eq:The energy remainder}
			\mathcal{R}(\Gamma,T) := \frac12 E(\Gamma,T^+) + \frac12 E(\Gamma,T^-) - E(\Gamma,T)\,.
		}
		Next we define the Jacobian determinant $$J^\pm := \left|\det\left(\calD \eta^\pm\right)\right| = \frac{1}{\left|\det\left(\calD \left(\eta^\pm\right)^{-1}\right)\right|\circ\eta^\pm}\,,$$ with $\calD$ being the total differential of the map (the Fréchet derivative), and the second equality follows from the chain rule for $\calD$ and $f^{-1}\circ f = \Id$. We thus have 
		\eq{
			Q(M)
		&= \int_{(\Gamma,T)\in M}\ee^{+W\calR(\Gamma,T)}\prod_{\sigma\in\Set{\pm}}\sqrt{\ee^{q X_n^\sigma-WE(\Gamma,T^\sigma)}\chi_{\Set{\overline{X_n^\sigma} \in B_\alpha(\sigma\delta F) }}J^\sigma(T) \frac{1}{J^\sigma(T)}}\dif{\Gamma}\dif{T} \\
			&\leq 
			\norm{\ee^{W\calR}\prod_{\sigma\in\Set{\pm}}\frac{1}{\sqrt{J^\sigma}}}_{L^\infty(M)}
			\prod_{\sigma\in\Set{\pm}}\sqrt{\int_M \ee^{q X_n^\sigma -WE (\Gamma,T^\sigma)}\chi_{\Set{\overline{X_n^\sigma} \in  B_{\alpha}(\sigma \delta F)}} J^\sigma \dif{\Gamma}\dif{T}} \ . } 
		where we have used the Cauchy-Schwarz inequality after an $L^\infty$ bound on the pre-factor.  
		
		At this point we would like to apply the change of variables formula to the two integrals on the right hand side, however the characteristic function depends in a complicated way on the variables $\{T_j\}_{j=1}^{n-1}$ through the function $F$, making it difficult to determine the domain of integration.  To circumvent this difficulty we note that on the event $M$, 
		$$ B_\alpha(\pm\delta F)\subseteq\pm(\delta\inf_M F-\alpha, \infty)  $$ where $-(a,\infty)=(-\infty,-a)$.  Thus
		\eq{
			Q(M) & \le \norm{\ee^{W\calR}\prod_{\sigma\in\Set{\pm}}\frac{1}{\sqrt{J^\sigma}}}_{L^\infty(M)}
			\prod_{\sigma\in\Set{\pm}}\sqrt{\int_M \ee^{q X_n^\sigma -WE (\Gamma,T^\sigma)}\chi_{\Set{\overline{X_n^\sigma} \in  \pm(\delta\inf_M F-\alpha, \infty)}} J^\sigma \dif{\Gamma}\dif{T}} \\
			&= 
			\norm{\ee^{W\calR}\prod_{\sigma\in\Set{\pm}}\frac{1}{\sqrt{J^\sigma}}}_{L^\infty(M)}
			\prod_{\sigma\in\Set{\pm}}\sqrt{\int_M\ee^{q X_n-WE(\Gamma,T)}\chi_{\Set{\overline{X_n} \in \pm(\delta\inf_M F-\alpha, \infty) }} \dif{\Gamma} \dif{T}}\,.}
		where we have applied the change of variables formula $\int f\circ \eta^\sigma J^\sigma d\Gamma dT = \int f \dif{\Gamma} \dif{T}$, which is valid by \cref{lem:eta is injective} provided we choose $\delta$ so that $\delta W \to 0$ (which we will).  
		Dividing by $\EE\left[\ee^{qX_n}\right]  Z_{n,W}$ we obtain the estimate
		\begin{equation}\label{eq:P(SM)} 
		\PP_q\left[S_\alpha\cap M\right] \le  \norm{\ee^{W\calR}\prod_{\sigma\in\Set{\pm}}\frac{1}{\sqrt{J^\sigma}}}_{L^\infty(M)}  \prod_{\sigma\in\Set{\pm}}\sqrt{\PP_q\left[\Set{\overline{X_n}\in \sigma(\delta\inf_M F-\alpha, \infty)}\right]}\end{equation}
		This concludes the bound on the $q$-probability of the set $S_\alpha\cap M$.
		
		We now estimate the $q$-probability of the complementary set, $S_\alpha\cap M^c$:  
		\eq{\PP_q\left[S_\alpha\cap M^c\right] =\frac{\ex{\ee^{q X_n}\chi_{S_\alpha \cap M^c}}}{\ex{\ee^{q X_n}}} \ .}
		To bound the denominator, we use the assumption \cref{eq:assumption which if false proves the main theorem} to obtain
		\eq{ \ex{\ee^{q X_n}} \ge \ex{\ee^{\frac{s}{2}X_n}}^{\frac{2q}{s}} \ge \ee^{2\xi n} ,}
		where in the middle step we have used Jensen's inequality, which is valid because $\frac{s}{2}<q<s$. For the numerator, we use Cauchy-Schwarz and Jensen's inequality again to obtain
		\eq{\ex{\ee^{q X_n}\chi_{S_\alpha \cap M^c}} \le \sqrt{\ex{\ee^{2qX_n}}}\sqrt{\PP\left[M^c\right]} \le W^{\frac12\frac{q}{s}C_{2s}}\sqrt{\PP\left[M^c\right]} \ , }
		where we have applied the \emph{a priori} bound \cref{lem:a-priori bound} in the last step. 
		Putting these estimates together we find that
		\eql{\label{eq:P(SMc)}
			\PP_q\left[S_\alpha\cap M^c\right] = \leq W^{\frac12\frac{q}{s}C_{2s}}\ee^{+2 \xi n}\sqrt{\PP\left[M^c\right]}
		}

		Combining \cref{eq:P(SM),eq:P(SMc)} we find \eq{
			\PP_q\left[\Set{\overline{X_n}\in B_\alpha(0)}\right] \leq & \norm{\ee^{W\calR}\prod_{\sigma\in\Set{\pm}}\frac{1}{\sqrt{J^\sigma}}}_{L^\infty(M)} \prod_{\sigma\in\Set{\pm}}\sqrt{\PP_q\left[\Set{\overline{X_n}\in \sigma(\delta\inf_M F-\alpha, \infty)
		    }\right]}\\&+W^{\frac12\frac{q}{s}C_{2s}}\ee^{+2\xi n}\sqrt{\PP\left[M^c\right]}\,.
		}
		Note that this last estimate is of the form \cref{eq:The MW bound}, yielding $\VV_q\left[S_\alpha\right]\gtrsim \frac{n}{W^\sharp}$, \emph{if} we can find an event $M$ for which the following three conditions are simultaneously satisfied:
		\begin{enumerate}
			\item $\inf_M F \geq \phi n $ for some $\phi \in (0,1]$ independent of $n,W$. 
			\item The $\beta$ term is of order $1$, i.e., \eql{\label{eq:beta term condition}\norm{\ee^{W\calR}\prod_{\sigma\in\Set{\pm}}\frac{1}{\sqrt{J^\sigma}}}_{L^\infty(M)} \lesssim 1\,.}
			\item The $\ve$ term is smaller than $1/2$, i.e.,
			\eql{\label{eq:epsilon term condition} W^{\frac12\frac{q}{s}C_{2s}}\ee^{+2\xi n}\sqrt{\PP\left[M^c\right]} < 1/2 \ . } 
		\end{enumerate}
		
		Such an event \emph{does} indeed exist. Fix some \eql{\label{eq:choice of phi}\phi\in\left(0,\frac{1}{6}\right)} and define 
		\eql{\label{eq:the event M}
			M_\phi := \Set{(\Gamma,T)| F \geq  \phi n } \ ,
		}
		which clearly fulfills the first condition. The requirement that $\delta \phi n -\alpha =2\alpha$ fixes $\delta$: \eql{\label{eq:delta}\delta := \frac{3}{\phi n} \alpha = \frac{3}{\phi}\frac{1}{\sqrt{n W^\sharp}}\,.}
		The proof of \cref{thm:main localization theorem} will hence be completed with the demonstration of \cref{eq:beta term condition,eq:epsilon term condition} for the specific choice $M=M_\phi$, which is the contents of \cref{sec:the beta term bound,sec:the epsilon term bound} respectively. In the proof of \cref{eq:beta term condition} we will see that it is necessary to take $\sharp\geq3$. We start, however, with the calculation of the Jacobian associated to the change of variables used above.

		\section{The derivative of the change of variables, $\calD\eta$}
		\label{sec:Calculation of the eta Jacobian}
		
		In this section we explicitly calculate the Jacobian of the map $\eta$ defined in \cref{eq:abstract eta} and prove \cref{lem:eta is injective}. Note that $\eta$ is in fact \emph{not} $\CC$-differentiable (since, e.g., $\Mat_W(\CC)\ni A\mapsto\HSn{A}^2$ is not). This does not matter, since for the change of variable $T\mapsto T^\pm$ we are concerned with above $\RR$-differentiability suffices. So we shall use the (obvious) isomorphism $\Mat_W(\CC)\cong\RR^{2W^2}$ when calculating the Jacobian. 
		
				We use the notation for the (Fréchet, or total) derivative which may be characterized as the linear approximation, i.e., \eql{\eta(A+\ve B) = \eta(A)+\ve \, \left(\calD\eta\right)_A \, B + \mathrm{o}\left(\ve\norm{B}\right)\qquad(A,B\in\Mat_W(\CC);\ve\to0^+)\,.
		}
		Thus schematically, from the definition \cref{eq:abstract eta} and the notation $$ \Phi(A) := \gamma  \vf\left(\frac{\HSn{A}^2}{W}\right)\vf\left(\frac{\HSn{G+z\Id+A\tilde{G}A^\ast}^2}{W^2}\right)\qquad(A\in\Mat_W(\CC)), $$ using the product rule, we have \eq{
			\left(\calD \eta\right)_A B &= \exp\left(\sigma\delta \Phi(A)\right)B + \exp\left(\sigma\delta \Phi(A)\right)\left(\sigma\delta\left(\calD \Phi\right)_A B\right)A \\
			&= \exp\left(\sigma\delta \Phi(A)\right)\left(B+ \sigma \delta \left(\left(\calD \Phi\right)_A B\right) A\right)
		}
		where it should be noted that since $\Phi$ is scalar-valued, $\left(\calD \Phi\right)_A:\RR^{2W^2}\to\RR$ is a $1\times 2W^2$ matrix, i.e., $\left(\calD \Phi\right)_A B$ is just a number. By the Riesz representation theorem there exists some vector $Q_A\in\RR^{2W^2}$ such that $ \left(\calD \Phi\right)_A B = \ip{Q_A}{B}_{\RR^{2W^2}}$. With this notation we all together have \eql{\label{eq:Jacobian of eta in terms of Jacobian of Phi} \left(\calD\eta\right)_A = \exp\left(\sigma\delta \Phi(A)\right)\left(\Id_{\RR^{2W^2}}+\sigma\delta A\otimes_{\RR^{2W^2}} Q_A^\ast\right)\,. } We now proceed to calculate $Q_A$.
		
		If $A\in\Mat_W(\CC)$, we have two matrices $A^R,A^I\in\Mat_W(\RR)$ defined via their elements $$(A^R)_{ij} := \Re{A_{ij}},\qquad (A^I)_{ij} := \Im{A_{ij}}$$ which yield then a vector $(A^R,A^I)\in\RR^{2W^2}$. In terms of this notation,
		\begin{enumerate}
			\item If $f:\Mat_W(\CC)\to\RR$ is defined by $f(A) = \HSn{A}^2$ then \eq{\frac12\left(\calD f\right)_A B = \Re{\HSip{A}{B}} = \ip{A^R}{B^R}_{\RR^{W^2}}+\ip{A^I}{B^I}_{\RR^{W^2}} =  \ip{(A^R,A^I)}{(B^R,B^I)}_{\RR^{2W^2}}\,.}
			\item If $g:\Mat_W(\CC)\to\RR$ is defined by $g(A) = \HSn{G_z+A\tilde{G}A^\ast}^2$ (with $G_z:=G+z\Id$, but we write $G$ for $G_z$ in this calculation for simplicity of notation) then \eq{\frac12\left(\calD g\right)_A B &=\Re{\HSip{G+A\tilde{G}A^\ast}{B\tilde{G}A^\ast+A\tilde{G}B^\ast}}\\
			&=\ip{((GA\tilde{G}+A\tilde{G}|A|^2\tilde{G})^R,(GA\tilde{G}+A\tilde{G}|A|^2\tilde{G})^I)}{(B^R,B^I)}_{\RR^{2W^2}}+\\
			&\quad+\ip{((\tilde{G}^\ast A^\ast G+\tilde{G}^\ast |A|^2\tilde{G}A^\ast)^{T,R},-(\tilde{G}^\ast A^\ast G+\tilde{G}^\ast |A|^2\tilde{G}A^\ast)^{T,I})}{(B^R,B^I)}_{\RR^{2W^2}}
			} where by $T$ we mean the transpose. 
		\end{enumerate}
		Thus since $\Phi(A) = \gamma \vf\left(\frac{f(A)}{W}\right)\vf\left(\frac{g(A)}{W^2}\right)$, we find \eq{
			\left(\calD \Phi\right)_A  &= \gamma \vf'\left(\frac{f(A)}{W}\right)\vf\left(\frac{g(A)}{W^2}\right)\frac{1}{W}\left(\calD f\right)_A  +\gamma \vf\left(\frac{f(A)}{W}\right)\vf'\left(\frac{g(A)}{W^2}\right)\frac{1}{W^2}\left(\calD g\right)_A \,.
		} Collecting everything together we can read off $Q_A$ (written now, for convenience, back as an element in $\Mat_W(\CC)$):
		\eql{\label{eq:Q_A} 
		Q_A &= \gamma \vf'\left(\frac{f(A)}{W}\right)\vf\left(\frac{g(A)}{W^2}\right)\frac{2}{W}A+\\
			&\qquad+\gamma \vf\left(\frac{f(A)}{W}\right)\vf'\left(\frac{g(A)}{W^2}\right)\frac{4}{W^2}\left((G+z\Id)A\tilde{G}+A\tilde{G}|A|^2\tilde{G}\right)\,.
		}
		
		We now prove the following basic estimate, which will be used within the proof of \cref{lem:eta is injective} (which is presented at the end of this section) and for other estimates later on.
		
		\begin{lem}\label{lem:technical estimate on A and Q_A}
		    For $A$ and $Q_A$ as defined above, 
		    \eql{\label{eq:technical estimate on A and Q_A}
		        \HSn{A}\HSn{Q_A} \lesssim W\,.
		    }
		\end{lem}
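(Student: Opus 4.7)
The plan is to exploit the cutoff structure of $Q_A$ systematically: whenever a given summand of $Q_A$ in \cref{eq:Q_A} is nonzero, the $\vf$ and $\vf'$ factors (supported in $[0,2K]$ with $K$ an absolute constant) together with the prefactor $\gamma$ automatically furnish the a-priori bounds $\HSn{A}\lesssim\sqrt{W}$, $\HSn{G},\HSn{\tilde{G}}\lesssim W$, and, in the second summand only, $\HSn{G+z\Id+A\tilde{G}A^\ast}\lesssim W$. Outside these supports the corresponding summand vanishes and the estimate is trivial, so the proof will split into separate bounds on the two summands of \cref{eq:Q_A}.

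The first summand $\gamma\vf'(f(A)/W)\vf(g(A)/W^2)(2/W)A$ is easy: $\vf'(f(A)/W)\neq 0$ already forces $\HSn{A}\lesssim\sqrt{W}$, so this summand has HS norm $\lesssim 1/\sqrt{W}$ and its contribution to $\HSn{A}\HSn{Q_A}$ is of order $1$.

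The second summand is where the main subtlety lies. The naive submultiplicative estimate $\HSn{A\tilde{G}A^\ast}\leq\norm{A}^2\HSn{\tilde{G}}\lesssim W^2$ would lose an extraneous factor of $W$. The crucial observation I would use is that the cutoff $\vf'(\HSn{G+z\Id+A\tilde{G}A^\ast}^2/W^2)$ controls the \emph{full sum} $G+z\Id+A\tilde{G}A^\ast$ rather than the individual pieces; combining this with $\HSn{G}\lesssim W$ (from $\gamma\neq 0$) and $\HSn{z\Id}=|z|\sqrt{W}\lesssim W$ (using $|z|\lesssim\sqrt{W}$) via the triangle inequality yields the sharper bound $\HSn{A\tilde{G}A^\ast}\lesssim W$. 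This refinement is the main obstacle of the proof; once it is in place the remaining work is routine bookkeeping with mixed operator and HS norms.

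To conclude, I would write $A\tilde{G}|A|^2\tilde{G}=(A\tilde{G}A^\ast)(A\tilde{G})$ and apply $\HSn{MN}\leq\HSn{M}\norm{N}$ together with $\norm{A\tilde{G}}\leq\norm{A}\norm{\tilde{G}}\lesssim\sqrt{W}\cdot W=W^{3/2}$ to obtain $\HSn{A\tilde{G}|A|^2\tilde{G}}\lesssim W\cdot W^{3/2}=W^{5/2}$, and handle $(G+z\Id)A\tilde{G}$ analogously via $\HSn{(G+z\Id)A\tilde{G}}\leq\HSn{G+z\Id}\norm{A\tilde{G}}\lesssim W\cdot W^{3/2}=W^{5/2}$. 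The $W^{-2}$ prefactor of the second summand then leaves an HS norm of order $\sqrt{W}$, and a final multiplication by $\HSn{A}\lesssim\sqrt{W}$ produces the required $W$ bound, completing the argument.
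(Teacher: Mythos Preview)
Your proposal is correct and follows essentially the same approach as the paper: both arguments extract the a-priori bounds $\HSn{A}\lesssim\sqrt{W}$, $\HSn{G},\HSn{\tilde{G}}\lesssim W$ from the supports of $\vf,\vf',\gamma$, and both rely on the same key observation that the cutoff on $g(A)/W^2$ forces $\HSn{G+z\Id+A\tilde{G}A^\ast}\lesssim W$, whence $\HSn{A\tilde{G}A^\ast}\lesssim W$ by the triangle inequality. The only cosmetic difference is that the paper uses pure Hilbert--Schmidt submultiplicativity throughout, whereas you use the mixed bound $\HSn{MN}\leq\HSn{M}\norm{N}$ with $\norm{A\tilde{G}}\lesssim W^{3/2}$; the resulting powers of $W$ are identical.
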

		\begin{proof}
		    We will use the facts that $|\vf'|\leq\chi_{[0,2K]}$ and that $\gamma$ controls the size of $G$ and $\tilde{G}$ (which follows by \cref{eq:F_j}). Thus, in all of these estimates, due to the various factors of $\vf$, we can always assume that: $\HSn{A}^2 \lesssim  W$, $\HSn{G} \lesssim W$, $\HSn{\tilde{G}} \lesssim W$, as well as $\HSn{G+z\Id+A\widetilde{G}A^*} \lesssim W$.
		    
		    Using the triangle inequality and submultiplicativity of the Hilbert-Schmidt norm, under the assumption that all these expressions are multiplied by appropriate factors of $\vf$, for $\HSn{A}\HSn{Q_A}$ we find:
		    \begin{enumerate}
		        \item $\frac{1}{W}\HSn{A}^2 \lesssim 1$.
		        \item $\frac{1}{W^2}\HSn{A} \HSn{GA\tilde{G}} \lesssim \frac{1}{W^2} W^3 = W$ and $\frac{1}{W^2}\HSn{A} \HSn{z A\tilde{G}} \lesssim |z|\frac{1}{W^2} W^2 = |z|$. This last error is negligible since we are assuming $|z|\lesssim \sqrt{W}$.
		        \item For the last term, we replace $A\tilde{G} A^\ast$ by $G+z\Id+A\widetilde{G}A^* - G -z\Id$, whose Hilbert-Schmidt norm is bounded by $W$, and so, $\frac{1}{W^2}\HSn{A}\HSn{A\tilde{G}|A|^2\tilde{G}}\lesssim \frac{1}{W^2}W^3 = W$.
		    \end{enumerate}
		\end{proof}

	We conclude this section with the proof of \cref{lem:eta is injective}.
	
		\begin{proof}[Proof of \cref{lem:eta is injective}]
			
			If $\Id_{\RR^{W^2}}-\eta$ is a contraction then $\eta$ is injective, which would be guaranteed (using the mean-value theorem e.g. on $[0,1]\ni t\mapsto\eta(t A+(1-t)B)$) if $\norm{\Id_{\RR^{W^2}}-\left(\calD \eta\right)_A}<1$ for all $A\in\RR^{2W^2}$. 
			
			Starting from \cref{eq:Jacobian of eta in terms of Jacobian of Phi}, we see that to prove this it suffices to show that \eq{ \norm{(1-\ee^{\sigma\delta\Phi(A)})\Id_{\RR^{2W^2}}-\ee^{\sigma\delta\Phi(A)}\sigma \delta A \otimes_{\RR^{2W^2}} Q_A^\ast}_{\text{operator norm of $\RR^{2W^2}$}} < 1\,. } 
	        Since $\delta \sim \frac{1}{\sqrt{n W^\sharp}}$, using the triangle inequality, it suffices to concerntrate only on the second term. We remark that for any $u,v\in\RR^{m}$, $\norm{u\otimes v^\ast}_{\text{op. norm of $\RR^{m}$}}\leq \norm{u}_{\RR^{m}}\norm{v}_{\RR^{m}}$ and that the Euclidean norm on $\RR^{2W^2}$ is precisely the Hilbert-Schmidt norm on $\Mat_W(\CC)$, i.e. $$ \norm{(A^R,A^I)}_{\RR^{2W^2}} = \HSn{A}\,. $$
			
			It is thus enough to ensure that \eql{\label{eq:final requirement to make sure eta is invertible} 2\delta \HSn{A}\HSn{Q_A} < 1/2\,. } Using \cref{lem:technical estimate on A and Q_A} to estimate $\HSn{A}\HSn{Q_A}\lesssim W$ and the fact that $\delta\sim n^{-1/2}W^{-3/2}$, we have together $$ \delta \HSn{A}\HSn{Q_A} \lesssim n^{-1/2}W^{-1/2}\,. $$ so we can certainly fulfill \cref{eq:final requirement to make sure eta is invertible}. This concludes the proof of this lemma.
		\end{proof}

		\section{The $\beta$ bound: proof of \cref{eq:beta term condition}}
		\label{sec:the beta term bound}

		The goal of this section is to establish \cref{eq:beta term condition}. We do this separately for $e^{W\calR}$ and for the product of the two Jacobians.
		\subsection{The remainder term $\calR$}
		We divide the remainder term $\cal R$ into the part that depends on $T$ directly and the part that depends on $T$ through $\Gamma$: 
		\eq{
			\calR^I_j &:= \frac12\HSn{T_j^+}^2+\frac12\HSn{T_j^-}^2-\HSn{T_j}^2,\\
			\calR^{II}_j &:= \frac12\HSn{V_j^+}^2+\frac12\HSn{V_j^-}^2-\HSn{V_j}^2,
		}
		where \eq{V_j \equiv  \Gamma_{j}+z\Id+T_{j-1} \Gamma_{j-1}^{-1}T_{j-1}^\ast\,.}
		
		For $\calR^I_j$, a Taylor expansion yields \eq{ |T_j^\pm|^2 = |T_j|^2\left(1 \pm 2 \delta F_j + 4 \delta^2 F_j^2  + \calO(\delta^3F_j)\right)\,. } 
		Applying \eq{ \frac12\HSn{A+ B}^2+\frac12\HSn{A- B}^2 - \HSn{A}^2=\HSn{B}^2 } we find \eq{ \calR^I_j \leq  \delta^2F_j\HSn{T_j}^2\left( 4 +\calO(\delta)\right)\lesssim \delta^2 K W  }
		where in the last step we have used \cref{eq:F_j}. We have $n$ terms summed in $\calR$, which is also multiplied by $W$ outside, so that inside the exponent we have all together for this term, $W^2 n \delta^2 \sim W^{2-\sharp}$ and $\sharp = 3$ even makes the $\delta^2$ term vanish in the limit $W\to\infty$, since by \cref{eq:delta}, $\delta\sim\frac{1}{\sqrt{n W^\sharp}}$.
		
		For $\calR^{II}_j$, defining $A,B,C$ through (omitting terms of order $\delta^3$ and even higher powers as above) \eq{ \left|\Gamma_{j}+z\Id+T_{j-1}^{\pm}\Gamma_{j-1}\left(T_{j-1}^{\pm}\right)^{\ast}\right|^{2} =: \left|A\pm\delta B+\delta^{2}C\right|^{2}} yields \eq{ \calR^{II}_j = \delta^{2}\norm{B}_{\text{HS}}^{2}+\delta^{2}\Re{ \left\langle C,A\right\rangle _{\text{HS}}}  } where we have used \eq{ \frac12\HSn{A+ B+C}^2 + \frac12\HSn{A- B+ C}^2 - \HSn{A}^2=\HSn{B}^2+2\Re{\HSip{C}{A}}\,. }
		
		Hence we need to estimate, first: 
		\eq{ \HSn{B}^2 &= 4 F_{j-1}^2 \HSn{T_{j-1}\Gamma_{j-1}^{-1}T_{j-1}^\ast}^2 \\
			&= 4 F_{j-1}^2 \HSn{V_j - \Gamma_j - z\Id}^2 \\
			&\lesssim F_{j-1}^2 \left(\HSn{V_j}^2+\HSn{\Gamma_j}^2+z^2W\right) \\
			&\lesssim KW^2+z^2 W
		} where in the last step we've used \cref{eq:F_j} and here it is clear that $\sharp = 3$ is precisely the threshold to make the $\delta^2$ term $\calO(1)$. Again we have used the fact $|z|\lesssim\sqrt{W}$.
		
		Second, we have \eq{ \Re{ \left\langle C,A\right\rangle _{\text{HS}}} &= 2F_{j-1}^{2}\Re{ \left\langle T_{j-1}\Gamma_{j-1}^{-1}T_{j-1}^{\ast},V_{j}\right\rangle _{\text{HS}}} } which is bounded in a similar manner as the $B$ term above after using a Schwarz inequality.
		
		We conclude that, even without restricting to the event $M_\phi$, just from the construction of $F$ in \cref{eq:F_j}, \eq{\norm{\ee^{W\calR}}_\infty\lesssim \ee^{\calO(1)}\lesssim 1\,. }
		
		\subsection{The product of the two Jacobians}
		Recall that we have abused the notation in the sense that $\eta^\pm$ was both the map on the single hopping matrix $T_j\mapsto T_j^\pm$ and also the symbol for the collective map $\Mat_W(\CC)^n\to\Mat_W(\CC)^n$ for all hopping matrices. Hence, for a single $j$, based on the definition of $\delta$ in \cref{eq:delta} and the map $\eta$ as in \cref{eq:abstract eta} it would suffice to show \eq{\left|\det\left(\calD \eta^+\right)\right| \left|\det\left(\calD \eta^-\right)\right| \gtrsim \ee^{-\delta^2 W^3}
		} which is equivalent (multiplying $n$ such terms and using that $\delta^2W^3\sim \frac{1}{n}$) to \eq{\norm{\prod_{\sigma\in\Set{\pm}}\frac{1}{\sqrt{J^\sigma}}}_{L^\infty}\lesssim\ee^{\calO(1)}\lesssim 1\,.}
	
		We now use the calculation for $\calD \eta$ from \cref{sec:Calculation of the eta Jacobian}. When multiplying the two Jacobians, the scalar exponential $\exp\left(\pm\delta F_j\right)$ cancels between the two and we have, using the notation $M_A := A\otimes Q_A^\ast$ for brevity, \eq{
			\left|\det\left(\calD \eta^+\right)\right| \left|\det\left(\calD \eta^-\right)\right| &= \left|\det\left(\Id+\delta M_A \right)\right|\left|\det\left(\Id-\delta M_A \right)\right| \\
			&= 
			\left|\det\left(\Id-\delta^2|M_A |^2\right)\right| \\
			&= \exp\left(\tr\left(\log\left(\Id-\delta^2|M_A |^2\right)\right)\right)
		} where in the second equality we have used the fact that $|\det X | = |\det X^\ast|$ and in the last equality we have used $\delta\norm{M_A } < 1$ which was established in the proof of \cref{lem:eta is injective} in \cref{sec:Calculation of the eta Jacobian}.
		
		By the spectral theorem, since $\log(1-\alpha)\geq-\alpha/(1-\alpha)$ for $\alpha\in(0,1)$, we have as a relation on self-adjoint operators, $$ \log\left(\Id-\delta^2|M_A |^2\right) \geq -\delta^2|M_A |^2(\Id-\delta^2|M_A |^2)^{-1}\geq -2\delta^2 |M_A |^2,$$ where we used that $\delta \norm{M_A }< 1/2$, and hence the same monotonicity holds when taking the trace. We thus get \eq{ \left|\det\left(\calD \eta^+\right)\right| \left|\det\left(\calD \eta^-\right)\right| \geq \exp\left(-\frac12\delta^2\tr_{\RR^{2W^2}}\left(|M_A|^2\right)\right)\,. }
		Now we have $\tr(|u\otimes v^\ast|^2) = \norm{u}^2\norm{v}^2$ and, as was already remarked above, the Euclidean norm in $\RR^{2W^2}$ is the Hilbert-Schmidt norm in $\Mat_W(\CC)$. Thus we are left with $$ \tr_{\RR^{2W^2}}\left(|M_A|^2\right) = \HSn{A}^2\HSn{Q_A}^2\lesssim W^2 $$ where the last step follows by \cref{lem:technical estimate on A and Q_A}.

		\section{The $\ve$ bound: proof of \cref{eq:epsilon term condition}}
		\label{sec:the epsilon term bound}
		
	The goal of this section is to prove that
	\begin{equation}
	    \label{eq:impprobou}
	    \mathbb{P}[M_\phi^c]\le e^{-\Xi n},
	\end{equation}
	for some fixed $\Xi>0$, which gives exactly \cref{eq:epsilon term condition}. Here we also recall that
	\[
	M_\phi:=\Set{(\Gamma,T)|F\ge \phi n},
	\]
	for some $\phi\in (0,1)$, which we will choose later in this section, and $F=\sum_j F_j$, with $F_j$ being defined as in \cref{eq:F_j}.
	
	As a first step we notice that $\widetilde{M}_\phi\subset M_\phi$, with $\widetilde{M}_\phi$ defined as
	\begin{equation}
	    \label{eq:deftildeM}
	    \widetilde{M}_\phi:= \Set{(\Gamma,T)|\HSn{\Gamma_j^{-1}},\HSn{\Gamma_{j+1}},\HSn{V_{j+1}}, \HSn{T_j}^2\le KW, \mathrm{for}\,\, \mathrm{at}\,\, \mathrm{least}\,\,\phi n\,\, \mathrm{indices} }
	\end{equation}
	
	Then, by the definition
		\[
			\Gamma_{j+1}=V_{j+1}+z+T_j\Gamma_j^{-1}T_j^*
			\]
		for any $j$, and a ``pigeon hole principle'' we readily see that
	\begin{equation}
	    \label{eq:inclusion}
	    M_1\cap M_2 \cap M_3 \cap M_4\subset \widetilde{M}_\phi,
	\end{equation}
	since the map $V\mapsto\Gamma$ is measure preserving, where we defined
	\begin{equation}
	    \begin{split}
	        M_1:&=\Set{\HSn{(V_j-A_j)^{-1}}\le KW/(3C^2) \,\,\mathrm{for}\,\, \mathrm{at}\,\, \mathrm{least}\,\,6\phi n\,\, \mathrm{indices}}, \\
	        M_2:&=\Set{\HSn{V_{j+1}}\le KW/3 \,\,\mathrm{for}\,\, \mathrm{at}\,\, \mathrm{least}\,\,6\phi n\,\, \mathrm{indices}}, \\
	        M_3:&=\Set{\HSn{T_j}^2\le KW \,\,\mathrm{for}\,\, \mathrm{at}\,\, \mathrm{least}\,\,6\phi n\,\, \mathrm{indices}}, \\
	        M_4:&=\Set{\norm{T_j}\le C \,\,\mathrm{for}\,\, \mathrm{at}\,\, \mathrm{least}\,\,6\phi n\,\, \mathrm{indices}},
	    \end{split}
	\end{equation}
	where we now choose any $\phi \in (0,1/6)$ as in \cref{eq:choice of phi}. Here $0< C< \sqrt{K}$ is a fixed constant, with $K$ which will be chosen shortly, and $A_j=A_j(V_{j-1},T_{j-1},\dots,V_1, T_1)$. Note that here we used that the change of variables $V_j\to \Gamma_j$ is measure preserving. Additionally, by $\mathbb{P}$ we denote the joint probability measure of all the $V_j$'s and $T_j$'s.
	
	Combining all this we readily see that
	\[
	\mathbb{P}[M_\phi^c]\le \sum_{i=1}^4 \mathbb{P}[M_i^c].
	\]
	
	Hence, to conclude the proof of \cref{eq:epsilon term condition} we now separately show that $\mathbb{P}[M_i^c]$ is exponentially small in $n$ for any $i=1,2,3,4$. To make the presentation shorter we only present the bound for $M_1$, all the other estimates being completely analogous after replacing \cref{eq:impbound} with
	\begin{equation}
				\label{eq:aeple}
				\mathbb{P}\left(\norm{T_j}_{HS}^2> KW\right)\le e^{-W^2},
			\end{equation}
			and a similar well-known bound for $\norm{T_j}$ and $\HSn{V_{j+1}}$.
			

		By \cite[Theorem 1]{doi:10.1142/S0219199717500286} we have that
		\begin{equation}
			\label{eq:impbound}
			\mathbb{P}_{V_j}\left(\norm{(V_j-A)^{-1}}_{HS}> K W/(3C^2)\right)\le \widetilde{C}\frac{C^2}{K},
		\end{equation}
		for some fixed constant $\widetilde{C}>0$, and for any $K\ge 1$ uniformly in deterministic matrices $A$, and $W$. Here $\mathbb{P}_{V_j}(E):=\mathbb{E}_{V_j}[\mathbf{1}_E]$, for any event $E$, where $\mathbb{E}_{V_j}$ denotes the expectation with respect to the measure of a single $V_j$.

			Next, using \cref{eq:impbound}, we get
			\begin{equation}
				\begin{split}
					\label{eq:impbneedbast}
					\mathbb{P}[M_1^c]&=\mathbb{P}\left(\norm{(V_j-A_j)^{-1}}_{HS}> KW/(3C^2)\,\, \mathrm{for}\,\, \mathrm{at}\,\, \mathrm{least}\,\,(1-6\phi)n\,\, \mathrm{indices}\right)  \\
					&=\sum_{m=(1-6\phi)n}^n\mathbb{P}\left(\norm{(V_j-A_j)^{-1}}_{HS}> KW/(3C^2)\,\, \mathrm{for}\,\, \mathrm{exactly}\,\,m\,\, \mathrm{indices}\right) \\
					&= \sum_{m=(1-6\phi)n}^n\sum_{S\subset \Set{1,\dots,n}\atop |S|=m}\mathbb{P}\left(\norm{(V_j-A_j)^{-1}}_{HS}> KW/(3C^2)\,\, \mathrm{for}\,\, j\in S \right)\\
					&\lesssim \sum_{m=(1-6\phi)n}^n\sum_{S\subset \Set{1,\dots,n}}\left(\begin{matrix} n \\ m\end{matrix}\right) \frac{(\widetilde{C}C^2)^{|S|}}{K^{|S|}}\le \frac{(2C^2\widetilde{C})^n}{K^{(1-6\phi)n}},
				\end{split}
			\end{equation}
			where we used that we can perform the $V_j$-integration one by one for any realization of $A_j$ (starting from the largest index and proceeding in a decreasing order), and that \cref{eq:impbound} holds uniformly for any fixed $A_j$. This concludes the proof of \cref{eq:impprobou} choosing $K$ sufficiently large so that
			\[
			\frac{(2C^2\widetilde{C})^n}{K^{(1-6\phi)n}}\le e^{-\Xi n}.
			\]
			
			
			

		\section{Generalizations}
		\label{sec:Generalizations}
		
		In this section we discuss other and more general models to which our proof applies. We begin by outlining the most general class of models we can treat, and then point which special choices correspond to models of interest.
		
		\subsection{Mixture of Gaussian vectors taking values in general vector spaces}
		
		Let $\FF \in \Set{\RR,\CC}$ and for  $\ell\in\Set{1,2}$, let $\calV_\ell$ be two $\RR$-vector spaces of dimension $W^2\lesssim \dim \calV_\ell\lesssim W^2$. Assume further that there are $\RR$-linear injections $i_2:\calV_2\to\Mat_W(\FF)$ and $i_1:\calV_1\to\Herm_W(\FF)$.
		
		\begin{defn}[Mixture of Gaussian measures] Let $f:\RR^m\to(0,\infty)$ be a  density function (associated to a probability measure which is absolutely continuous w.r.t. to the Lebesgue measure on $\RR^m$). We say that $f$ is a mixture of Gaussians iff there is a positive measure $\mu$ on $(0,\infty)$ (which depends on $f$, \emph{but not on $m$}) such that \eql{\label{eq:mixture of Gaussians}
				f(v) = \int_{\lambda=0}^{\infty}\ee^{-\lambda \sqrt{m} \norm{v}^2}\dif{\mu(\lambda)}\qquad(v\in\RR^m)
			} where we use the Euclidean norm on $\RR^m$.  
		\end{defn}
		
		Let $p_\ell$ be a density on $\calV_\ell$ which is a mixture of Gaussians as in \cref{eq:mixture of Gaussians}, and further assume that the associated measures $\mu_\ell$ have support of diameter independent of $n,W$, i.e., \eql{\label{eq:assumption on support of generalized density}
		\supp(\mu_\ell) \subseteq [0,D_\ell]
		} 
		for some constant $D_\ell>0$ independent of $n,W$.
		\begin{rem}
		    The same proof generalizes also to measures $\mu_\ell$ which do not have compact support, but sufficient decay at infinity so as to guarantee that the $L^\infty$ estimates further below on the remainder term $\calR$ go through, when replaced by integration. We do not pursue this further generalization here, but point out that \cref{eq:assumption on support of generalized density} heavily restricts the class of measures possible, in the sense that without lifting it, since we treat this coupling $\lambda$ as quenched, we are effectively taking the Ginibre, GUE distribution on each matrix but allowing the variances to vary with $j$. Since there are classes of interesting models covered if \cref{eq:assumption on support of generalized density} were lifted, we phrased the condition as a mixture of Gaussians. An example of such a model is: $$ M\mapsto \frac{1}{Z}\exp\left(-\xi\HSn{M}^\alpha\right) $$ with $\alpha\in(0,2)$.
		\end{rem}
		
		We define now the random Hamiltonian $H$ which is an $nW\times nW$ $\FF$-valued matrix as in \cref{eq:Hamiltonian}, but generalized so that $\Set{V_j}_{j=1}^n$ is an i.i.d. sequence of matrices each taking value in $i_1(\calV_1)\subseteq \Herm_W(\FF)$ and distributed according to $p_1$ and $\Set{T_j}_{j=1}^{n-1}$ is an i.i.d. sequence of matrices each taking values in $i_2(\calV_2)\subseteq \Mat_W(\FF)$ and distributed according to $p_2$. Since we required that each element in $i_1(\calV_1)$ is self-adjoint, $H$ itself is a self-adjoint matrix over $\FF$. Explicitly, the distribution of $H$ is given as follows. For any measurable $f:\Mat_\FF(W)\to\CC$, \eq{
			\ex{f(H)} \equiv \frac{\int_{V_1,\dots,V_n\in i_1(\calV_1);\, T_1,\dots,T_{n-1}\in i_2(\calV_2)}\left(\prod_{j=1}^{n-1} p_1(V_j)p_2(T_j)\right)p_1(V_n) f(H)\dif{V}\dif{T}}{\int_{V_1,\dots,V_n\in i_1(\calV_1);\, T_1,\dots,T_{n-1}\in i_2(\calV_2)}\left(\prod_{j=1}^{n-1} p_1(V_j)p_2(T_j)\right)p_1(V_n) \dif{V}\dif{T}} 
		} where by $\dif{V}\dif{T}$ we mean the Lebesgue measures $\dif{V_1}\dots\dif{V_n}\dif{T_1}\dots\dif{T_{n-1}}$.
		
	
		\begin{thm}[Generalization of \cref{thm:main localization theorem}]\label{thm:general localization theorem}
		Assume that $H$ is distributed as detailed in the present section. Assume further that the large deviation estimate \cref{eq:aeple} holds for both $p_1$ and $p_2$,\footnote{It is actually enough that \cref{eq:aeple} holds with $e^{-W^2}$ replaced by the inverse of a large constant $K$, independent of $W$, as in \cref{eq:impbound} and that \cref{eq:impbound} holds for $p_1$.} Then there exists an $s_0\in(0,1)$ such that for all $s\in(0,s_0)$ and $z\in\RR$ with $|z|<M$, where $M\in(0,\infty)$ and $M\lesssim \sqrt{W}$, there exist $C<\infty,\mu>0$, independent of $n$ and $W$, such that
		\eql{
			\label{eq:generalized main fractional moment bound}
			\ex{\norm{G(x,y;z)}^s} \leq W^C \exp\left(-\mu \frac{|x-y|}{W^\sharp}\right)\qquad(x,y\in\Set{1,\dots,n})
		}
		with $\sharp=3$.
		\end{thm}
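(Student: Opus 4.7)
My plan is to adapt the proof of \cref{thm:main localization theorem} step by step, identifying the minimal modifications needed to accommodate the three sources of generalization: the choice of field $\FF\in\Set{\RR,\CC}$, the restriction of $V_j$ and $T_j$ to the subspaces $i_\ell(\calV_\ell)$, and the mixture-of-Gaussians structure of the distributions $p_\ell$. The skeleton of the argument is unchanged: reduce via \cref{lem:log variance} and the a-priori bound to a lower bound on the logarithmic fluctuations of $X_n=\log\norm{\calG_n}$, and then produce those fluctuations by an adaptive Mermin--Wagner shift of the hopping matrices $T_j$, with the same choices $\delta\sim(nW^3)^{-1/2}$, $\alpha\sim\sqrt{n/W^3}$ and $M=M_\phi$ that yielded $\sharp=3$ in \cref{sec:MW}.

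To handle the mixture-of-Gaussians distribution I would condition on the mixing parameter. By Fubini and \cref{eq:mixture of Gaussians}, it suffices to fix auxiliary independent variables $\{\lambda_j^V\}_j\subset \supp(\mu_1)$ and $\{\lambda_j^T\}_j\subset \supp(\mu_2)$ and work with the conditional distributions, for which each $V_j$ and each $T_j$ is a centered Gaussian on $i_\ell(\calV_\ell)$ with variance a function of the corresponding $\lambda$. The compact-support assumption \cref{eq:assumption on support of generalized density} ensures that the relevant variances are uniformly bounded in $n,W$, so that every constant appearing in the conditional argument can be made uniform in the $\lambda$'s; the final bound then survives integration against $\prod_j\dif\mu_1(\lambda_j^V)\dif\mu_2(\lambda_j^T)$.

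The subspace/field generalization is, with one caveat, cosmetic. The Mermin--Wagner shift $T_j\mapsto\exp(\pm\delta F_j)T_j$ from \cref{eq:pm transformation} is scalar multiplication, hence maps $i_2(\calV_2)$ into itself and leaves the induced Lebesgue measure on this subspace well-defined; the Jacobian computation in \cref{sec:Calculation of the eta Jacobian} carries over with $\RR^{2W^2}$ replaced by $\calV_2$, using only that $\dim\calV_2\approx W^2$ to rerun \cref{lem:technical estimate on A and Q_A}. The change of variables $V_j\to\Gamma_j$ in \cref{sec:fac} is, at fixed earlier variables, an affine bijection from $i_1(\calV_1)$ onto its translate by a fixed Hermitian matrix, with Jacobian one on this subspace. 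The $\beta$-bound of \cref{sec:the beta term bound} is a purely algebraic manipulation of Hilbert--Schmidt norms plus the $\vf$ cutoffs, so it transfers verbatim; the resulting polynomial-in-$W$ constants are absorbed into the $W^C$ factor in \cref{eq:generalized main fractional moment bound}. For the $\ve$-bound in \cref{sec:the epsilon term bound}, the pigeon-hole argument around \cref{eq:deftildeM}--\cref{eq:impbneedbast} depends only on the Wegner-type bound \cref{eq:impbound} for $p_1$ and on the tail estimate \cref{eq:aeple} for $p_1$ and $p_2$ (or on the weaker replacements permitted by the footnote), which are exactly what the theorem hypothesizes.

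The one real obstacle I anticipate is justifying the Wegner-type estimate \cref{eq:impbound} uniformly for the conditional Gaussian measure on the subspace $i_1(\calV_1)\subseteq\Herm_W(\FF)$ rather than on all of $\Herm_W(\CC)$. Two points have to be checked: first, that the proof of the Wegner bound from \cite{doi:10.1142/S0219199717500286} still applies when one restricts the Gaussian density to an $\RR$-linear subspace of $\Herm_W(\FF)$ of dimension $\gtrsim W^2$ (using injectivity of $i_1$ so that the induced measure is non-degenerate); and second, that this bound is uniform in the mixing parameter $\lambda\in\supp(\mu_1)$, which is guaranteed by the compact-support hypothesis. Strictly speaking \cref{thm:general localization theorem} packages this technical issue into its hypothesis rather than proving it; once it is in hand, the same $\beta$, $\ve$ and fluctuation estimates combine as in \cref{sec:MW} via \cref{prop:main MW estimate} and \cref{lem:MW lower bound on fluctuations} to yield \cref{eq:generalized main fractional moment bound} with $\sharp=3$.
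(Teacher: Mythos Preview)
Your proposal is correct and follows essentially the same approach as the paper's own sketch: condition on the mixing parameters $\lambda_{\ell;j}$, use the compact-support assumption \cref{eq:assumption on support of generalized density} to make all constants uniform, replace $\RR^{2W^2}$ by $\RR^{\dim\calV_2}$ in the Jacobian calculation of \cref{sec:Calculation of the eta Jacobian}, and invoke the hypothesized tail and Wegner-type bounds for the $\ve$-estimate of \cref{sec:the epsilon term bound}. You are in fact slightly more careful than the paper in two places---you observe explicitly that the scalar shift $T_j\mapsto\ee^{\pm\delta F_j}T_j$ preserves the subspace $i_2(\calV_2)$, and you flag that the $V_j\to\Gamma_j$ map sends $i_1(\calV_1)$ to a translate rather than to itself---and your concern about the Wegner bound on the subspace is legitimate but, as you correctly note, absorbed into the theorem's hypotheses via the footnote.
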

		\begin{proof}[Sketch of proof]
			The first step is to expand all the factors of $p_1,p_2$ using \cref{eq:mixture of Gaussians}. Once this is done, conditioned on all factors the variances $\lambda_{\ell;j}$, where $\ell=1$ correspond to $V_j$ and $\ell=2$ corresponds to $T_j$, we get the following distribution on $H$. For any measurable $f:\Mat_W(\FF)\to\CC$, \eq{
				\tilde{\EE}\left[f(H)\right] \equiv \frac{\int_{V_1,\dots,V_n\in i_1(\calV_1);\, T_1,\dots,T_{n-1}\in i_2(\calV_2)}\ee^{-W\tr\left(\lambda_{1;n}|V_n|^2+\sum_{j=1}^{n-1}\lambda_{1;j}|V_j|^2+\lambda_{2;j}|T_j|^2\right)} f(H)\dif{V}\dif{T}}{\int_{V_1,\dots,V_n\in i_1(\calV_1);\, T_1,\dots,T_{n-1}\in i_2(\calV_2)}\ee^{-W\tr\left(\lambda_{1;n}|V_n|^2+\sum_{j=1}^{n-1}\lambda_{1;j}|V_j|^2+\lambda_{2;j}|T_j|^2\right)} \dif{V}\dif{T}} \,.
			}
			The proof as outlined above now goes through verbatim the same way, with minor modifications. Those modifications are as follows: instead of working with $\Mat_W(\CC)\cong\RR^{2W^2}$ for the change of variables for the $T_j$'s in \cref{sec:Calculation of the eta Jacobian}, we will have $\RR^{\dim\calV_2}$. Furthermore, the constants in the large deviation estimates in \cref{sec:the epsilon term bound} will also change. The assumption on the support of the measures $\mu_\ell$, \cref{eq:assumption on support of generalized density}, will guarantee that the estimates in \cref{sec:the beta term bound} go through (the $\calR$ term will contain $\lambda_{\ell;j}$ factors) via a simple $L^\infty$ bound. 
		\end{proof}
		
		\subsection{Models of interest}
		\begin{enumerate}
			\item To get the proper ``RBM'' model rather than the Wegner W-orbital model we have analyzed here, one replaces the Ginibre distribution used above for $T_j$'s, which takes values in the vector space $\Mat_W(\CC)$ with a Gaussian distribution on the vector space of triangular $W\times W$ complex matrices. The distribution of the $V_j$'s is unaltered.
			\item To get real-valued rather than complex-valued matrices (and hence replace the GUE distribution of the $V_j$'s with a GOE distribution, and use the real Ginibre distribution on $\Mat_W(\RR)$ for the $T_j$'s), use the vector spaces $\calV_1 := \RR^{W^2}, \calV_2 := \RR^{W^2}$.
		\end{enumerate}

			\appendix
			
			\section{Technical results}
			\label{sec:technical}
			Here we present proofs for some of the technical lemmata.
			\begin{proof}[Proof of \cref{lem:log variance}]
				It is enough to establish
				\eq{ \frac{1}{s}\int_{0}^{s}f_{r,s}(q)\,\VV_{q}\left[\log\left(Y\right)\right]\dif{q} = \log\left(\frac{\EE\left[Y^s\right]^{r/s}}{\EE\left[Y^r\right]}\right)\,.
				}
				Starting from the left hand side, we use the identity \eq{
					\partial_q^2 \log\left(\ex{\exp\left(q X\right)}\right) = \VV_q[X]
				} 
				and place it into the integral. Separating the integration over $[0,s]$ to $[0,r]$ and $[r,s]$, the function $f_{r,s}$ simplifies and may then be integrated by parts (twice, for each segment of integration), yielding the right hand side.
			\end{proof}
			\begin{proof}[Proof of \cref{lem:MW lower bound on fluctuations}]
				We have the following chain of inequalities
				\eq{
					\mathbb{P}\left[\left[-\alpha,\alpha\right]^{c}\right] & =\mathbb{P}\left[\left(-\infty,-\alpha\right)\right]+\mathbb{P}\left[\left(\alpha,\infty\right)\right]\\
					& \geq\mathbb{P}\left[\left(-\infty,-a\right]\right]+\mathbb{P}\left[\left[a,\infty\right)\right] \ ,
					\intertext{since $a>\alpha$.  By the arithmetic mean-geometric mean inequality, this is further bounded below} 
					& \geq2\sqrt{\mathbb{P}\left[\left(-\infty,-a\right]\right]\mathbb{P}\left[\left[a,\infty\right)\right]}\\
					& \geq\frac{2}{\beta}\left(\mathbb{P}\left[\left[-\alpha,\alpha\right]\right]-\varepsilon\right)\\
					  &=\frac{2}{\beta}\left(1-\mathbb{P}\left[\left[-\alpha,\alpha\right]^{c}\right]-\varepsilon\right) \ ,
				}
				where in the middle step we have applied the hypothesis \cref{eq:MW hypothesis}.  The final expression implies
				\eq{
					\mathbb{P}\left[\left[-\alpha,\alpha\right]^{c}\right]  \geq  \frac{1-\varepsilon}{1+\frac{1}{2}\beta}\,.
				}
				We conclude with a simple Markov inequality
				\eq{
					\mathbb{E}\left[X^{2}\right] \geq \alpha^{2}\mathbb{P}\left[\Set{\left|X\right|\geq\alpha}\right]\geq\alpha^{2}\mathbb{P}\left[\Set{\left|X\right|>\alpha}\right]\,. \qquad  \qedhere
				}
			\end{proof}	
			
			\begingroup
			\let\itshape\upshape
			\printbibliography

@Article{Aizenman2001,
	author="Aizenman, Michael
	and Schenker, Jeffrey H.
	and Friedrich, Roland M.
	and Hundertmark, Dirk",
	title="Finite-volume fractional-moment criteria for {Anderson} localization",
	journal="Commun. Math. Phys.",
	year="2001",
	month="Nov",
	day="01",
	volume="224",
	number="1",
	pages="219--253",
	abstract="A technically convenient signature of localization, exhibited by discrete operators with random potentials, is exponential decay of the fractional moments of the Green function within the appropriate energy ranges. Known implications include: spectral localization, absence of level repulsion, strong form of dynamical localization, and a related condition which plays a significant role in the quantization of the Hall conductance in two-dimensional Fermi gases. We present a family of finite-volume criteria which, under some mild restrictions on the distribution of the potential, cover the regime where the fractional moment decay condition holds. The constructive criteria permit to establish this condition at spectral band edges, provided there are sufficient ``Lifshitz tail estimates'' on the density of states. They are also used here to conclude that the fractional moment condition, and thus the other manifestations of localization, are valid throughout the regime covered by the ``multiscale analysis''. In the converse direction, the analysis rules out fast power-law decay of the Green functions at mobility edges.",
	issn="1432-0916",
	doi="10.1007/s002200100441",
	url="https://doi.org/10.1007/s002200100441"
}

@Article{Carmona1987,
	author="Carmona, Rene
	and Klein, Abel
	and Martinelli, Fabio",
	title="Anderson localization for {Bernoulli} and other singular potentials",
	journal="Commun. Math. Phys.",
	year="1987",
	volume="108",
	number="1",
	pages="41--66",
	abstract="We prove exponential localization in the Anderson model under very weak assumptions on the potential distribution. In one dimension we allow any measure which is not concentrated on a single point and possesses some finite moment. In particular this solves the longstanding problem of localization for Bernoulli potentials (i.e., potentials that take only two values). In dimensions greater than one we prove localization at high disorder for potentials with H{\"o}lder continuous distributions and for bounded potentials whose distribution is a convex combination of a H{\"o}lder continuous distribution with high disorder and an arbitrary distribution. These include potentials with singular distributions.",
	issn="1432-0916",
	doi="10.1007/BF01210702",
	url="http://dx.doi.org/10.1007/BF01210702"
}

@article{KLEIN1990135,
title = "Localization for the Anderson model on a strip with singular potentials",
journal = "J. Funct. Anal.",
volume = "94",
number = "1",
pages = "135 - 155",
year = "1990",
note = "",
issn = "0022-1236",
doi = "http://dx.doi.org/10.1016/0022-1236(90)90031-F",
url = "http://www.sciencedirect.com/science/article/pii/002212369090031F",
author = "Abel Klein and Jean Lacroix and Athanasios Speis",

}

@Article{Graf1994,
	author={Graf, Gian Michele},
	title={Anderson localization and the space-time characteristic of continuum states},
	journal={Journal of Statistical Physics},
	year={1994},
	month={Apr},
	day={01},
	volume={75},
	number={1},
	pages={337-346},
	abstract={A proof of Anderson localization is obtained by ruling out any continuous spectrum on the basis of the space-time characteristic of its states.},
	issn={1572-9613},
	doi={10.1007/BF02186292},
	url={https://doi.org/10.1007/BF02186292}
}

@article{PhysRevLett.64.1851,
	title = {Scaling properties of band random matrices},
	author = {Casati, Giulio and Molinari, Luca and Izrailev, Felix},
	journal = {Phys. Rev. Lett.},
	volume = {64},
	issue = {16},
	pages = {1851--1854},
	numpages = {0},
	year = {1990},
	month = {Apr},
	publisher = {American Physical Society},
	doi = {10.1103/PhysRevLett.64.1851},
	url = {https://link.aps.org/doi/10.1103/PhysRevLett.64.1851}
}

@article{PhysRevE.48.R1613,
	title = {Band-random-matrix model for quantum localization in conservative systems},
	author = {Casati, G. and Chirikov, B. V. and Guarneri, I. and Izrailev, F. M.},
	journal = {Phys. Rev. E},
	volume = {48},
	issue = {3},
	pages = {R1613--R1616},
	numpages = {0},
	year = {1993},
	month = {Sep},
	publisher = {American Physical Society},
	doi = {10.1103/PhysRevE.48.R1613},
	url = {https://link.aps.org/doi/10.1103/PhysRevE.48.R1613}
}

@Article{Schenker2009,
	author={Schenker, Jeffrey},
	title={Eigenvector Localization for Random Band Matrices with Power Law Band Width},
	journal={Communications in Mathematical Physics},
	year={2009},
	month={Sep},
	day={01},
	volume={290},
	number={3},
	pages={1065-1097},
	abstract={It is shown that certain ensembles of random matrices with entries that vanish outside a band around the diagonal satisfy a localization condition on the resolvent which guarantees that eigenvectors have strong overlap with a vanishing fraction of standard basis vectors, provided the band width W raised to a power $\mu$ remains smaller than the matrix size N. For a Gaussian band ensemble, with matrix elements given by i.i.d. centered Gaussians within a band of width W, the estimate $\mu$ ≤ 8 holds.},
	issn={1432-0916},
	doi={10.1007/s00220-009-0798-0},
	url={https://doi.org/10.1007/s00220-009-0798-0}
}

@article{10.1093/imrn/rnx145,
	author = {Peled, Ron and Schenker, Jeffrey and Shamis, Mira and Sodin, Sasha},
	title = "{On the Wegner Orbital Model}",
	journal = {International Mathematics Research Notices},
	volume = {2019},
	number = {4},
	pages = {1030-1058},
	year = {2017},
	month = {07},
	abstract = "{The Wegner orbital model is a class of random operators introduced by Wegner to model the motion of a quantum particle with many internal degrees of freedom (orbitals) in a disordered medium. We consider the case when the matrix potential is Gaussian, and prove three results: localisation at strong disorder, a Wegner-type estimate on the mean density of eigenvalues, and a Minami-type estimate on the probability of having multiple eigenvalues in a short interval. The last two results are proved in the more general setting of deformed block-Gaussian matrices, which includes a class of Gaussian band matrices as a special case. Emphasis is placed on the dependence of the bounds on the number of orbitals. As an additional application, we improve the upper bound on the localisation length for one-dimensional Gaussian band matrices.}",
	issn = {1073-7928},
	doi = {10.1093/imrn/rnx145},
	url = {https://doi.org/10.1093/imrn/rnx145},
	eprint = {https://academic.oup.com/imrn/article-pdf/2019/4/1030/27844186/rnx145.pdf},
}

@article{Bourgade2017,
	doi = {10.4310/atmp.2017.v21.n3.a5},
	url = {https://doi.org/10.4310/atmp.2017.v21.n3.a5},
	year = {2017},
	publisher = {International Press of Boston},
	volume = {21},
	number = {3},
	pages = {739--800},
	author = {Paul Bourgade and Laszlo Erd{\H{o}}s and Horng-Tzer Yau and Jun Yin},
	title = {Universality for a class of random band matrices},
	journal = {Advances in Theoretical and Mathematical Physics}
}

@article{https://doi.org/10.1002/cpa.21895,
	author = {Bourgade, Paul and Yau, Horng-Tzer and Yin, Jun},
	title = {Random Band Matrices in the Delocalized Phase I: Quantum Unique Ergodicity and Universality},
	journal = {Communications on Pure and Applied Mathematics},
	volume = {73},
	number = {7},
	pages = {1526-1596},
	doi = {https://doi.org/10.1002/cpa.21895},
	url = {https://onlinelibrary.wiley.com/doi/abs/10.1002/cpa.21895},
	eprint = {https://onlinelibrary.wiley.com/doi/pdf/10.1002/cpa.21895},
	year = {2020}
}

@Article{Shcherbina2017,
	author={Shcherbina, Mariya
	and Shcherbina, Tatyana},
	title={Characteristic Polynomials for 1D Random Band Matrices from the Localization Side},
	journal={Communications in Mathematical Physics},
	year={2017},
	month={May},
	day={01},
	volume={351},
	number={3},
	pages={1009-1044},
	abstract={We study the special case of {\$}{\$}{\{}n{\backslash}times n{\}}{\$}{\$}1D Gaussian Hermitian random band matrices, when the covariance of the elements is determined by {\$}{\$}{\{}J=(-W^2{\backslash}triangle+1)^{\{}-1{\}}{\}}{\$}{\$}. Assuming that the band width {\$}{\$}{\{}W{\backslash}ll {\backslash}sqrt{\{}n{\}}{\}}{\$}{\$}, we prove that the limit of the normalized second mixed moment of characteristic polynomials (as {\$}{\$}{\{}W, n{\backslash}to {\backslash}infty{\}}{\$}{\$}) is equal to one, and so it does not coincide with that for GUE. This complements the result of Shcherbina (J Stat Phys 155(3):466--499, 2014) and proves the existence of the expected crossover for 1D Hermitian random band matrices at {\$}{\$}{\{}W{\backslash}sim {\backslash}sqrt{\{}n{\}}{\}}{\$}{\$}on the level of characteristic polynomials.},
	issn={1432-0916},
	doi={10.1007/s00220-017-2849-2},
	url={https://doi.org/10.1007/s00220-017-2849-2}
}

@misc{shapiro2021incomplete,
      title={Incomplete Localization for Disordered Chiral Strips}, 
      author={Jacob Shapiro},
      year={2021},
      eprint={2108.10978},
      archivePrefix={arXiv},
      primaryClass={math-ph}
}

@misc{macera2021anderson,
      title={Anderson localisation for quasi-one-dimensional random operators}, 
      author={Davide Macera and Sasha Sodin},
      year={2021},
      eprint={2110.00097},
      archivePrefix={arXiv},
      primaryClass={math-ph}
}

@Article{Milos2015,
author={Mi{\l}o{\'{s}}, Piotr
and Peled, Ron},
title={Delocalization of Two-Dimensional Random Surfaces with Hard-Core Constraints},
journal={Communications in Mathematical Physics},
year={2015},
month={Nov},
day={01},
volume={340},
number={1},
pages={1-46},
abstract={We study the fluctuations of random surfaces on a two-dimensional discrete torus. The random surfaces we consider are defined via a nearest-neighbor pair potential, which we require to be twice continuously differentiable on a (possibly infinite) interval and infinity outside of this interval. No convexity assumption is made and we include the case of the so-called hammock potential, when the random surface is uniformly chosen from the set of all surfaces satisfying a Lipschitz constraint. Our main result is that these surfaces delocalize, having fluctuations whose variance is at least of order log n, where n is the side length of the torus. We also show that the expected maximum of such surfaces is of order at least log n. The main tool in our analysis is an adaptation to the lattice setting of an algorithm of Richthammer, who developed a variant of a Mermin--Wagner-type argument applicable to hard-core constraints. We rely also on the reflection positivity of the random surface model. The result answers a question mentioned by Brascamp et al. on the hammock potential and a question of Velenik.},
issn={1432-0916},
doi={10.1007/s00220-015-2419-4},
url={https://doi.org/10.1007/s00220-015-2419-4}
}

@article{doi:10.1142/S0219199717500286,
author = {Aizenman, Michael and Peled, Ron and Schenker, Jeffrey and Shamis, Mira and Sodin, Sasha},
title = {Matrix regularizing effects of Gaussian perturbations},
journal = {Communications in Contemporary Mathematics},
volume = {19},
number = {03},
pages = {1750028},
year = {2017},
doi = {10.1142/S0219199717500286},

URL = { 
        https://doi.org/10.1142/S0219199717500286
    
},
eprint = { 
        https://doi.org/10.1142/S0219199717500286
    
}
,
    abstract = { The addition of noise has a regularizing effect on Hermitian matrices. This effect is studied here for H = A + V, where A is the base matrix and V is sampled from the GOE or the GUE random matrix ensembles. We bound the mean number of eigenvalues of H in an interval, and present tail bounds for the distribution of the Frobenius and operator norms of H−1 and for the distribution of the norm of H−1 applied to a fixed vector. The bounds are uniform in A and exceed the actual suprema by no more than multiplicative constants. The probability of multiple eigenvalues in an interval is also estimated. }
}

@article{fyodorov1991scaling,
  title={{Scaling properties of localization in random band matrices: a $\sigma$-model approach}},
  author={Fyodorov, Yan V and Mirlin, Alexander D},
  journal={Physical review letters},
  volume={67},
  number={18},
  pages={2405},
  year={1991},
  publisher={APS}
}

@article{brodie2020density,
  title={{The density of states and local eigenvalue statistics for random band matrices of fixed width}},
  author={Brodie, Benjamin and Hislop, Peter D},
  journal={arXiv preprint arXiv:2008.13167},
  year={2020}
}

@article{bao2017delocalization,
  title={{Delocalization for a class of random block band matrices}},
  author={Bao, Zhigang and Erd{\H{o}}s, L{\'a}szl{\'o}},
  journal={Probability Theory and Related Fields},
  volume={167},
  number={3},
  pages={673--776},
  year={2017},
  publisher={Springer}
}

@article{bourgade2019random,
  title={{Random band matrices in the delocalized phase, II: Generalized resolvent estimates}},
  author={Bourgade, Paul and Yang, Fan and Yau, H-T and Yin, Jun},
  journal={Journal of Statistical Physics},
  volume={174},
  number={6},
  pages={1189--1221},
  year={2019},
  publisher={Springer}
}

@article{yang2021random,
  title={{Random band matrices in the delocalized phase, III: Averaging fluctuations}},
  author={Yang, Fan and Yin, Jun},
  journal={Probability Theory and Related Fields},
  volume={179},
  number={1},
  pages={451--540},
  year={2021},
  publisher={Springer}
}

@inproceedings{erdHos2011quantum,
  title={{Quantum diffusion and delocalization for band matrices with general distribution}},
  author={Erd{\H{o}}s, L{\'a}szl{\'o} and Knowles, Antti},
  booktitle={Annales Henri Poincar{\'e}},
  volume={12},
  number={7},
  pages={1227--1319},
  year={2011},
  organization={Springer}
}

@article{erdHos2015altshuler,
  title={{The Altshuler--Shklovskii formulas for random band matrices I: the unimodular case}},
  author={Erd{\H{o}}s, L{\'a}szl{\'o} and Knowles, Antti},
  journal={Communications in Mathematical Physics},
  volume={333},
  number={3},
  pages={1365--1416},
  year={2015},
  publisher={Springer}
}

@inproceedings{erdHos2015altshuler2,
  title={{The Altshuler--Shklovskii formulas for random band matrices II: the general case}},
  author={Erd{\H{o}}s, L{\'a}szl{\'o} and Knowles, Antti},
  booktitle={Annales Henri Poincar{\'e}},
  volume={16},
  number={3},
  pages={709--799},
  year={2015},
  organization={Springer}
}

@article{erdHos2013delocalization,
  title={{Delocalization and diffusion profile for random band matrices}},
  author={Erd{\H{o}}s, L{\'a}szl{\'o} and Knowles, Antti and Yau, Horng-Tzer and Yin, Jun},
  journal={Communications in Mathematical Physics},
  volume={323},
  number={1},
  pages={367--416},
  year={2013},
  publisher={Springer}
}

@article{fyodorov1994statistical,
  title={{Statistical properties of eigenfunctions of random quasi 1d one-particle Hamiltonians}},
  author={Fyodorov, Yan V and Mirlin, Alexander D},
  journal={International Journal of Modern Physics B},
  volume={8},
  number={27},
  pages={3795--3842},
  year={1994},
  publisher={World Scientific}
}

@article{sodin2010spectral,
  title={{The spectral edge of some random band matrices}},
  author={Sodin, Sasha},
  journal={Annals of mathematics},
  pages={2223--2251},
  year={2010},
  publisher={JSTOR}
}

@article{shcherbina2021universality,
  title={{Universality for 1D random band matrices}},
  author={Shcherbina, Mariya and Shcherbina, Tatyana},
  journal={Communications in Mathematical Physics},
  volume={385},
  number={2},
  pages={667--716},
  year={2021},
  publisher={Springer}
}

@article{shcherbina2018universality,
  title={{Universality for 1d random band matrices: sigma-model approximation}},
  author={Shcherbina, Mariya and Shcherbina, Tatyana},
  journal={Journal of Statistical Physics},
  volume={172},
  number={2},
  pages={627--664},
  year={2018},
  publisher={Springer}
}

@article{shcherbina2022susy,
  title={{SUSY transfer matrix approach for the real symmetric 1d random band matrices}},
  author={Shcherbina, Tatyana},
  journal={Electronic Journal of Probability},
  volume={27},
  pages={1--29},
  year={2022},
  publisher={Institute of Mathematical Statistics and Bernoulli Society}
}

@article{shcherbina2020characteristic,
  title={{Characteristic polynomials for random band matrices near the threshold}},
  author={Shcherbina, Tatyana},
  journal={Journal of Statistical Physics},
  volume={179},
  number={4},
  pages={920--944},
  year={2020},
  publisher={Springer}
}

@article{shcherbina2014second,
  title={{On the second mixed moment of the characteristic polynomials of 1D band matrices}},
  author={Shcherbina, Tatyana},
  journal={Communications in Mathematical Physics},
  volume={328},
  number={1},
  pages={45--82},
  year={2014},
  publisher={Springer}
}

@article{Anderson_1958_PhysRev.109.1492,
  title = {Absence of Diffusion in Certain Random Lattices},
  author = {Anderson, P. W.},
  journal = {Phys. Rev.},
  volume = {109},
  issue = {5},
  pages = {1492--1505},
  numpages = {0},
  year = {1958},
  month = {Mar},
  publisher = {American Physical Society},
  doi = {10.1103/PhysRev.109.1492},
  url = {https://link.aps.org/doi/10.1103/PhysRev.109.1492}
}

@article{Frohlich_Spencer_1983_cmp/1103922279,
author = {Jürg Fröhlich and Thomas Spencer},
title = {{Absence of diffusion in the Anderson tight binding model for large disorder or low energy}},
volume = {88},
journal = {Communications in Mathematical Physics},
number = {2},
publisher = {Springer},
pages = {151 -- 184},
year = {1983},
doi = {cmp/1103922279},
URL = {https://doi.org/}
}

@article{Aizenman_Molchanov_1993_cmp/1104253939,
author = {Michael Aizenman and Stanislav Molchanov},
title = {{Localization at large disorder and at extreme energies: an elementary derivation}},
volume = {157},
journal = {Communications in Mathematical Physics},
number = {2},
publisher = {Springer},
pages = {245 -- 278},
year = {1993},
doi = {cmp/1104253939},
URL = {https://doi.org/}
}

@article{Gang_of_4_1979_PhysRevLett.42.673,
  title = {Scaling Theory of Localization: Absence of Quantum Diffusion in Two Dimensions},
  author = {Abrahams, E. and Anderson, P. W. and Licciardello, D. C. and Ramakrishnan, T. V.},
  journal = {Phys. Rev. Lett.},
  volume = {42},
  issue = {10},
  pages = {673--676},
  numpages = {0},
  year = {1979},
  month = {Mar},
  publisher = {American Physical Society},
  doi = {10.1103/PhysRevLett.42.673},
  url = {https://link.aps.org/doi/10.1103/PhysRevLett.42.673}
}

@article{Kunz_Souillard_1980_cmp/1103908590,
author = {Hervé Kunz and Bernard Souillard},
title = {{Sur le spectre des opérateurs aux différences finies aléatoires}},
volume = {78},
journal = {Communications in Mathematical Physics},
number = {2},
publisher = {Springer},
pages = {201 -- 246},
year = {1980},
doi = {cmp/1103908590},
URL = {https://doi.org/}
}

@inproceedings{disertori2017density,
  title={Density of states for random band matrices in two dimensions},
  author={Disertori, Margherita and Lager, Mareike},
  booktitle={Annales Henri Poincar{\'e}},
  volume={18},
  number={7},
  pages={2367--2413},
  year={2017},
  organization={Springer}
}

@article{disertori2002density,
  title={{Density of states for random band matrices}},
  author={Disertori, Margherita and Pinson, Haru and Spencer, Thomas},
  journal={Communications in mathematical physics},
  volume={232},
  number={1},
  pages={83--124},
  year={2002},
  publisher={Springer}
}

@article{yang2021delocalization,
  title={{Delocalization and quantum diffusion of random band matrices in high dimensions I: Self-energy renormalization}},
  author={Yang, Fan and Yau, Horng-Tzer and Yin, Jun},
  journal={arXiv preprint arXiv:2104.12048},
  year={2021}
}

@article{yang2021delocalization2,
  title={{Delocalization and quantum diffusion of random band matrices in high dimensions II: $ T $-expansion}},
  author={Yang, Fan and Yau, Horng-Tzer and Yin, Jun},
  journal={arXiv preprint arXiv:2107.05795},
  year={2021}
}

@Article{Richthammer2007,
author={Richthammer, Thomas},
title={Translation-Invariance of Two-Dimensional Gibbsian Point Processes},
journal={Communications in Mathematical Physics},
year={2007},
month={Aug},
day={01},
volume={274},
number={1},
pages={81-122},
abstract={The conservation of translation as a symmetry in two-dimensional systems with interaction is a classical subject of statistical mechanics. Here we establish such a result for Gibbsian particle systems with two-body interaction, where the interesting cases of singular, hard-core and discontinuous interaction are included. We start with the special case of pure hard core repulsion in order to show how to treat hard cores in general.},
issn={1432-0916},
doi={10.1007/s00220-007-0274-7},
url={https://doi.org/10.1007/s00220-007-0274-7}
}

@article{Pfister_1981_cmp/1103908962,
author = {Charles Edouard Pfister},
title = {{On the symmetry of the Gibbs states in two-dimensional lattice systems}},
volume = {79},
journal = {Communications in Mathematical Physics},
number = {2},
publisher = {Springer},
pages = {181 -- 188},
year = {1981},
doi = {cmp/1103908962},
URL = {https://doi.org/}
}

@article{goldsheid1977,
  title = {A Pure Point Spectrum of the Stochastic One-Dimensional Schr\"odinger Operator},
  author = {Goldsheid, I. Ya. and Molchanov, S. A. and Pastur, L. A.},
  year = {1977},
  journal = {Functional Analysis and Its Applications},
  volume = {11},
  number = {1},
  pages = {1--8},
  issn = {0016-2663},
  doi = {10.1007/BF01135526}
}

@article{aizenmanLocalizationWeakDisorder1994,
  title = {Localization at Weak Disorder: Some Elementary Bounds},
  author = {Aizenman, M},
  year = {1994},
  month = jan,
  journal = {Reviews in Mathematical Physics},
  volume = {6},
  number = {05a},
  pages = {1163--1182},
  issn = {0129-055X},
  doi = {10.1142/S0129055X94000419}
}

@article{frohlichConstructiveProofLocalization1985,
  title = {Constructive Proof of Localization in the {{Anderson}} Tight Binding Model},
  author = {Fr{\"o}hlich, J. and Martinelli, F. and Scoppola, E. and Spencer, T.},
  year = {1985},
  month = mar,
  journal = {Communications in Mathematical Physics},
  volume = {101},
  number = {1},
  pages = {21--46},
  issn = {1432-0916},
  doi = {10.1007/BF01212355},
  langid = {english}
}

@article{PhysRevLett.17.1133,
  title = {Absence of Ferromagnetism or Antiferromagnetism in One- or Two-Dimensional Isotropic Heisenberg Models},
  author = {Mermin, N. D. and Wagner, H.},
  journal = {Phys. Rev. Lett.},
  volume = {17},
  issue = {22},
  pages = {1133--1136},
  numpages = {0},
  year = {1966},
  month = {Nov},
  publisher = {American Physical Society},
  doi = {10.1103/PhysRevLett.17.1133},
  url = {https://link.aps.org/doi/10.1103/PhysRevLett.17.1133}
}

@article{wegnerOrbital1979,
  title = {Disordered system with $n$ orbitals per site: $n=\ensuremath{\infty}$ limit},
  author = {Wegner, Franz J.},
  journal = {Phys. Rev. B},
  volume = {19},
  issue = {2},
  pages = {783--792},
  numpages = {0},
  year = {1979},
  month = {Jan},
  publisher = {American Physical Society},
  doi = {10.1103/PhysRevB.19.783},
  url = {https://link.aps.org/doi/10.1103/PhysRevB.19.783}
}

@article{minamiLocalFluctuationSpectrum1996,
  title = {Local Fluctuation of the Spectrum of a Multidimensional {{Anderson}} Tight Binding Model},
  author = {Minami, Nariyuki},
  year = {1996},
  month = apr,
  journal = {Communications in Mathematical Physics},
  volume = {177},
  number = {3},
  pages = {709--725},
  issn = {1432-0916},
  doi = {10.1007/BF02099544},
  langid = {english}
}

@article{hislop2022,
  title = {On the {{Local Eigenvalue Statistics}} for {{Random Band Matrices}} in the {{Localization Regime}}},
  author = {Hislop, Peter D. and Krishna, M.},
  year = {2022},
  month = apr,
  journal = {Journal of Statistical Physics},
  volume = {187},
  number = {3},
  pages = {26},
  issn = {1572-9613},
  doi = {10.1007/s10955-022-02923-5},
  langid = {english}
}

@article{dobrushinAbsenceBreakdownContinuous1975,
  title = {Absence of Breakdown of Continuous Symmetry in Two-Dimensional Models of Statistical Physics},
  author = {Dobrushin, R. L. and Shlosman, S. B.},
  year = {1975},
  month = feb,
  journal = {Communications in Mathematical Physics},
  volume = {42},
  number = {1},
  pages = {31--40},
  publisher = {{Springer-Verlag}},
  issn = {0010-3616, 1432-0916},
  doi = {10.1007/BF01609432},
  langid = {english}
}

@article{cipolloni2021eigenstate,
  title={{Eigenstate thermalization hypothesis for Wigner matrices}},
  author={Cipolloni, Giorgio and Erd{\H{o}}s, L{\'a}szl{\'o} and Schr{\"o}der, Dominik},
  journal={Communications in Mathematical Physics},
  volume={388},
  number={2},
  pages={1005--1048},
  year={2021},
  publisher={Springer}
}

@article{cipolloni2022normal,
  title={{Normal fluctuation in quantum ergodicity for Wigner matrices}},
  author={Cipolloni, Giorgio and Erd{\H{o}}s, L{\'a}szl{\'o} and Schr{\"o}der, Dominik},
  journal={The Annals of Probability},
  volume={50},
  number={3},
  pages={984--1012},
  year={2022},
  publisher={Institute of Mathematical Statistics}
}

@article{cipolloni2022rank,
  title={{Rank-uniform local law for Wigner matrices}},
  author={Cipolloni, Giorgio and Erd{\H{o}}s, L{\'a}szl{\'o} and Schr{\"o}der, Dominik},
  journal={arXiv preprint arXiv:2203.01861},
  year={2022}
}

@article{benigni2022fluctuations,
  title={{Fluctuations in local quantum unique ergodicity for generalized Wigner matrices}},
  author={Benigni, Lucas and Lopatto, Patrick},
  journal={Communications in Mathematical Physics},
  volume={391},
  number={2},
  pages={401--454},
  year={2022},
  publisher={Springer}
}

@article{bourgade2017eigenvector,
  title={{The eigenvector moment flow and local quantum unique ergodicity}},
  author={Bourgade, Paul and Yau, H-T},
  journal={Communications in Mathematical Physics},
  volume={350},
  number={1},
  pages={231--278},
  year={2017},
  publisher={Springer}
}

@book{AizenmanWarzel2016,
	title={Random Operators},
	author={Aizenman, M., and Warzel, S.},
	isbn={9781470419134},
	lccn={2015025474},
	url={https://books.google.ch/books?id=gM9YCwAAQBAJ},
	year={2015},
	publisher={Amer. Math. Soc.}
}

@misc{Chen_Smart2022,
  author = "Chen, Nixia and Smart, Charles K",
  howpublished = "personal communication"
}

@article{Kozma_Peled_10.1214/21-EJP639,
author = {Gady Kozma and Ron Peled},
title = {{Power-law decay of weights and recurrence of the two-dimensional VRJP}},
volume = {26},
journal = {Electronic Journal of Probability},
number = {none},
publisher = {Institute of Mathematical Statistics and Bernoulli Society},
pages = {1 -- 19},
keywords = {decay of correlations, Random walk in random environment, supersymmetry, Vertex-reinforced jump process},
year = {2021},
doi = {10.1214/21-EJP639},
URL = {https://doi.org/10.1214/21-EJP639}
}

@article{kritchevski2012,
  title = {The {{Scaling Limit}} of the {{Critical One-Dimensional Random Schr\"odinger Operator}}},
  author = {Kritchevski, Eugene and Valk{\'o}, Benedek and Vir{\'a}g, B{\'a}lint},
  year = {2012},
  month = sep,
  journal = {Communications in Mathematical Physics},
  volume = {314},
  number = {3},
  pages = {775--806},
  issn = {1432-0916},
  doi = {10.1007/s00220-012-1537-5},
  langid = {english}
}

@article{rifkind2018,
  title = {Eigenvectors of the 1-Dimensional Critical Random {{Schr\"odinger}} Operator},
  author = {Rifkind, Ben and Vir{\'a}g, B{\'a}lint},
  year = {2018},
  month = oct,
  journal = {Geometric and Functional Analysis},
  volume = {28},
  number = {5},
  pages = {1394--1419},
  issn = {1420-8970},
  doi = {10.1007/s00039-018-0460-0},
  langid = {english}
}

@article{valko2014,
  title = {Random {{Schr\"odinger Operators On Long Boxes}}, {{Noise Explosion}} and the {{GOE}}},
  author = {Valk{\'o}, Benedek and Virag, B{\'a}lint},
  year = {2014},
  journal = {Transactions of the American Mathematical Society},
  volume = {366},
  number = {7},
  pages = {3709--3728},
  publisher = {{American Mathematical Society}},
  issn = {0002-9947},
  doi = {10.1090/S0002-9947-2014-05974-6}
}
			\endgroup
		\end{document}